\newcommand{\diagonehalf}{\sfrac{1}{2}}
\newtheorem{proposition}{Proposition}[section]
\newtheorem{theorem}[proposition]{Theorem}
\theoremstyle{definition}
\newtheorem{assumption}{Assumption}
\newcommand{\noteA}[2][noinline]{\todo[color=red!40,#1]{#2}}
\newcommand{\supplementary}{\textsc{supplementary material}}
\newcommand{\losscdf}{\ensuremath{F}}
\newcommand{\modelcdf}{\ensuremath{\widehat{\losscdf}}}
\newcommand{\PnL}{P{\&}L}
\newcommand{\lagpfunc}{\ensuremath{h}}
\newcommand{\lagpfuncmat}{\ensuremath{H}}
\newcommand{\dimW}{\ensuremath{m}}
\newcommand{\portfolio}{\textit{Pf}}
\newcommand{\betaa}{\ensuremath{a}}
\newcommand{\betab}{\ensuremath{b}}
\newcommand{\LeSt}{Lebesgue-Stieltjes}
\newcommand{\supp}{\operatorname{supp}}
\newcommand{\bigO}{\mathcal{O}}
\newcommand{\zeromean}[1]{\ensuremath{\widetilde{#1}}}
\newcommand{\onesmatrix}{\ensuremath{J}}
\newcommand{\dnudiscrete}{\ensuremath{\gamma}}  
\newcommand{\nudiscrete}{\ensuremath{\Gamma}}  
\newcommand{\likhood}{\ensuremath{{\cal L}}}  
\newcommand{\indicator}[1]{\ensuremath{\mathbbm{1}_{\{#1\}}}}
\newcommand{\imputed}[1]{\ensuremath{\hat{#1}}}
\newcommand{\EMone}{DQ}   
\newcommand{\EMtwo}{V.BIN}   
\newcommand{\MDhalf}{V.\diagonehalf}  
\newcommand{\MDfour}{V.4}      
\newcommand{\average}[1]{\ensuremath{\overline{#1}}}
\newcommand{\hpower}{\ensuremath{c}}
\newcommand{\ZLp}{ZL$_{+}$}
\newcommand{\ZLn}{ZL$_{-}$}
\newcommand{\bankfiltration}{\mathcal{F}}
\newcommand{\testfiltration}{\bankfiltration^*}
\renewcommand{\P}{\mathbb{P}}
\newcommand{\E}{\mathbb{E}}
\newcommand{\N}{\mathbb{N}}
\newcommand{\var}{\operatorname{var}}
\DeclareMathOperator{\VaR}{VaR}
\newcommand{\cov}{\operatorname{cov}}
\renewcommand{\leq}{\leqslant}
\renewcommand{\geq}{\geqslant}
\newcommand{\rd}{\mathrm{d}}
\newcommand{\gi}{\leftarrow}
\newcommand{\ourThanks}{\thanks{We thank Harrison Katz and Sathya Ramesh for
     excellent research assistance.   We have
     benefitted from discussion with Mike Giles, Marie Kratz, Hsiao Yen Lok, David Lynch, David
     McArthur, Michael Milgram, and  Johanna Ziegel. The opinions
     expressed here are our own, and do not reflect the views of the
     Board of Governors or its staff.   Address correspondence to
     Michael Gordy, Federal Reserve Board, Washington DC 20551, USA,  +1-202-452-3705,
     \texttt{michael.gordy@frb.gov}.}}
\begin{document}
\title{Spectral backtests of forecast distributions\\ with application to risk 
management\ourThanks}
 \author{Michael B.~Gordy}
 \affil{Federal Reserve Board, Washington DC}
 \author{Alexander J.\ McNeil}
 \affil{The York Management School, University of York}


\date{July 26, 2019}

\maketitle
\begin{abstract}
We study a class of backtests for forecast distributions in which the test statistic  
depends on a spectral transformation that weights exceedance events by a function of the 
modeled probability level. 
The weighting scheme is specified by a kernel measure which
makes explicit the user's priorities for model 
performance.  The class of spectral backtests includes tests of unconditional coverage and 
tests of conditional coverage. 
We show how the class embeds a wide variety of backtests in the existing 
literature, and further propose novel variants which are easily implemented, well-sized
and have good power.  
In an empirical application, we backtest forecast distributions for the overnight \PnL\ of 
ten bank trading portfolios. For some portfolios, 
test results depend materially on the choice of kernel.
\end{abstract}

\noindent \textit{JEL} Codes: C52; G21; G28; G32\\
\noindent \textit{Keywords}: Backtesting; Volatility; Risk management\\
\newpage

\begin{doublespacing}
\section{Introduction}
In many forecasting exercises, fitting some range of quantiles of the forecast 
distribution may be prioritized in model design and calibration.  In risk 
management applications, which motivate this study, accuracy near the 
median of the distribution or in the ``good tail'' of high profits is generally 
much less important than accuracy in the ``bad tail'' of large losses.  Even 
within the region of primary interest, preferences may be nonmonotonic in 
probabilities.  For example, the modeller may care a great deal about assessing 
the magnitude of once-in-a-decade market disruptions, but care much less about 
quantiles in the extreme tail that are consequent to unsurvivable cataclysmic 
events.  In this paper, we study a class of backtests for forecast distributions 
in which the test statistic  
weights exceedance events by a function of the modeled probability level.  The 
weighting scheme is specified by a kernel measure which makes explicit the priorities for model 
performance.  The backtest statistic and its asymptotic 
 distribution are analytically tractable for a very large class of kernels.

Our approach unifies a wide variety of existing approaches to
backtesting.  In the area of risk management, the time-honored test
statistic \citep[dating back to][]{bib:kupiec-95} is simply a count of
``VaR exceedances,'' i.e., indicator variables equal to one whenever
the realized trading loss is in excess of the day-ahead value-at-risk
(VaR) forecast.  In our framework, this is the case where the kernel
is Dirac measure concentrated at the target VaR level.  At the other extreme, the
tests applied in \citet{bib:diebold-gunther-tay-98} represent a special case in 
which weights are uniform across all probability levels.  The likelihood-ratio
test of \citet{bib:berkowitz-01} and the expected
  shortfall and spectral risk measure tests
  of~\citet{bib:du-escanciano-17} and~\citet{bib:costanzino-curran-15}
  represent intermediate cases of a
  kernel truncated to tail probabilities. While these works are
  related  to our own, we make a distinct threefold contribution: (i)
  we offer an overarching testing framework that embeds many existing
  tests and many new ones, including discrete spectral tests and
  multivariate spectral tests; (ii) we emphasize the idea that choice of
  backtest should be guided by a user's preferences for model
  performance as expressed in kernel choice, rather than by the blind pursuit of power; 
  and (iii) we propose a general form of conditional test
  which may be combined with any kernel and which nests the
  unconditional test as a special case.
  

The application of a weighting function in this paper bears some 
similarity to the approach of \citet{bib:amisano-giacomini-07}
and~\citet{bib:gneiting-ranjan-11} in the 
literature on comparisons of density forecasts. In both of 
those papers, weights are
applied to a forecast scoring rule to obtain measures
of forecast performance that accentuate the tails (or other
regions) of the
distribution. However, the measure for any one forecasting method has no
absolute meaning and is designed to facilitate comparison with other methods
using the general comparative testing approach proposed 
by~\citet{bib:diebold-mariano-95}. In contrast, our
tests are absolute tests of forecast quality in the spirit
of~\citet{bib:diebold-gunther-tay-98}. While the comparative testing
approach is useful for the internal refinement of the
forecasting method by the forecaster, the absolute testing approach in this 
paper facilitates external evaluation of the forecaster's results by
another agent, such as a regulator. In this paper we adopt
  the perspective of such an agent who must make a judgement based
  on a predefined set of data supplied by the forecaster and who has
  very limited information about the model choices made by the
  forecaster.

Our investigation is motivated in part by a major expansion in the data
available to regulators for the backtesting exercise.  Prior to 2013, banks in the
US reported to regulators VaR exceedances at the 99\% level. The
new Market Risk Rule mandates that banks report for each trading day the 
probability associated with the realized profit-and-loss (\PnL) 
in the prior day's forecast distribution,
which is equivalent to providing the regulator with VaR exceedances \textit{at every level}
$\alpha\in\lbrack 0,1\rbrack$.  The expanded reporting regime allows us to assess the
tradeoff between power and specificity in backtesting.   If a regulator is
concerned narrowly with the validation of reported VaR at level
$\alpha$, then a count of VaR exceedances is a sufficient statistic
for a test for unconditional coverage. However, if the regulator is
willing to assign positive weight to probability levels in a
\textit{neighborhood} of $\alpha$, we can construct more powerful
backtests.  Furthermore, our approach is consistent with a broader
view of the risk manager's mandate to forecast probabilities over a
range of large losses. The formal
guidance of US regulators to banks on internal model validation
explicitly requires ``checking the distribution of losses against
other estimated percentiles'' \citep[p.~15]{bib:us-sr-11-7}.

Under the reforms mandated by the Fundamental Review of the Trading Book 
\citep{bib:basel-13}, 99\%-VaR is replaced by 97.5\%-Expected Shortfall 
(ES) as the
determinant of capital requirements.  While there has been a lot of debate 
around the question of whether or not ES is amenable to direct backtesting 
\citep{bib:gneiting-11,bib:acerbi-szekely-14,
bib:fissler-ziegel-gneiting-16}, our contribution addresses a
  different issue. We devise tests of the \textit{forecast
    distribution} from which risk measures are estimated and not tests of the
  \textit{risk measure} estimates.  When ES is of primary
  interest it may be argued that a satisfactory forecast of the tail of the loss
  distribution is of even greater importance, since the risk measure depends
on the whole tail.

In Section \ref{sec:theory}, we lay out the statistical setting for the risk 
manager's forecasting problem and the data to be collected for backtesting.   
The transformation that
underpins the class of spectral backtests is introduced in Section \ref{sec:transf-PIT}.
Spectral backtests of unconditional coverage
are described in Section \ref{sec:tests_unconditional}.  In Section 
\ref{sec:tests_independence}, we develop tests of conditional 
coverage based on the martingale difference property.   
As an application to real data, in Section \ref{sec:empirical} we backtest 
ten bank models for overnight \PnL\ distributions for trading portfolios. 

\section{Theory and practice of risk measurement}\label{sec:theory}

We assume that a bank models \PnL\ on a filtered probability 
space $(\Omega,\mathcal{F}, (\bankfiltration_t)_{t\in\N_0},\P)$ where
$\bankfiltration_t$ represents the information available to the risk
manager at time $t$, $\N_0 = \N \cup \{0\}$ and $\N$ denotes the non-zero
natural numbers. 
For any time $t\in\N$, $L_t$ is an $\bankfiltration_{t}$-measurable random variable
representing portfolio loss (i.e., negative \PnL) in currency units.  
We denote the conditional loss distribution given information to time $t-1$ by
\begin{displaymath}
      \losscdf_{t}(x) = \P\left(L_t \leq x \mid \bankfiltration_{t-1}\right).
\end{displaymath}
The loss distribution cannot be assumed to be time-invariant.  The distribution of
returns on the underlying risk factors (e.g., equity prices, exchange rates) is 
time-varying, most notably due to stochastic volatility.  Furthermore, 
$\losscdf_t$ depends on the composition of the portfolio.  Because the
portfolio is rebalanced in each period, $\losscdf_t$ can evolve over time even
when factor returns are iid. 

For $t\in \N$ we can define the process $(U_t)$ by $U_{t} =
\losscdf_{t}(L_{t})$ using the probability integral transform (PIT).  Under the
assumption that the conditional loss distributions at each time point
are continuous, the result of~\citet{bib:rosenblatt-52}~implies that the
process $(U_t)_{t\in\N}$ is a sequence of iid standard uniform
variables, notwithstanding the fact that
$(L_t)$ is typically non-stationary. The risk manager builds a model $\modelcdf_t$ of $\losscdf_t$ based 
on information up to time $t-1$. \textit{Reported PIT-values} are the 
corresponding rvs $(P_t)$ obtained by setting $P_t =\modelcdf_{t}(L_{t})$ for
$t\in\N$.  The regulator is assumed to have no direct knowledge of $\modelcdf_t$, but can draw
inferences based on a sample of the PIT-values. If the models $\modelcdf_t$ form a sequence of \textit{ideal}
probabilistic forecasts in the sense 
of~\cite{bib:gneiting-balabdaoui-raftery-07}, i.e., coinciding with the
conditional laws $\losscdf_t$ of $L_t$ for every $t$,
then we expect the reported PIT-values to behave like an iid 
sample of standard uniform variates; tests of this property are tests
that the  sequence of models is \emph{calibrated in
  probability}.

For any $\alpha$
in the unit interval, let $\widehat{\VaR}_{\alpha,t} := \modelcdf_t^\gi(\alpha)$ be an
estimate of the $\alpha$-VaR constructed at time $t-1$ by calculating
the generalized inverse of $\modelcdf_t$ at $\alpha$.  Since the VaR exceedance event $\{  L_{t} \geq \widehat{\VaR}_{\alpha,t} \}$ is equal to the event $\{P_t\geq \alpha\}$, the PIT-value provides
a sufficient statistic for the VaR exceedances at \textit{all} possible levels.  Thus,  we would expect
well-designed tests that use reported PIT-values to be more powerful than VaR 
exceedance tests in detecting deficiencies in the models $\modelcdf_t$.

Our tests make no assumptions about the procedures and models used by the bank in
forecasting.  In practice, there is considerable heterogeneity in methodology.  
For nearly two decades, most large banks have relied primarily on some variant of 
historical sampling (HS), which is a nonparametric method based on re-sampling of
historical risk-factor changes or returns.  
As HS fails to account for serial dependencies in returns due to time-varying volatility, 
some banks adopt \textit{filtered} historical simulation (FHS) as suggested
by~\citet{bib:hull-white-98} and~\citet{bib:barone-adesi-et-al-98}. 
In this approach, the historical risk-factor returns are 
normalized by their estimated volatilities, which are typically obtained
by taking an exponentially-weighted moving-average of past squared 
returns.   Banks that do not use HS or FHS typically adopt a parametric model
for the joint distribution of risk-factor changes.

In our empirical application, testing for delayed response to changes in volatility is of special interest. 
Assuming a roughly symmetric loss distribution centered at zero, 
the frequent switching between positive and negative values will tend to cause PIT values to be serially uncorrelated, even when volatility is misspecified in
the model.  However, extreme PIT-values (i.e., near 0 or 1) will tend to beget extreme PIT-values in
high volatility periods, and middling PIT-values (i.e., near \diagonehalf) will tend to beget middling PIT-values in low volatility periods.  This pattern can be inferred by examining autocorrelation in
the transformed values $\left|2P_t - 1\right|$.  We will exploit this transformation
in implementing tests of conditional coverage in Section \ref{sec:empirical}.

There are relatively few empirical studies of bank VaR forecasting.
\citet{bib:berkowitz-o-brien-02} show that VaR estimates by US banks
are conservative (i.e., there are fewer exceedances than expected) and
that the forecasts underperform simple time-series models applied to
daily \PnL.  Conversative forecasts have been documented as well for
Canadian banks \citep{bib:perignon-deng-wang-08} and in a larger
international sample \citep{bib:perignon-smith-10}.  
The sensitivity of such results to sample period is revealed by \citet{bib:obrien-szerszen-17}. 
In their sample of five large US banks from 2001--2014, tests of unconditional coverage reject VaR
forecasts as excessively conservative for all banks in the periods of relative stability 
(2001--2006 and 2010--2014).  In the crisis period of 2007--2009, however, \citeauthor{bib:obrien-szerszen-17} 
reject VaR forecasts as insufficiently conservative for all five
banks, and serial independence is rejected for four of the banks.  This pattern is
consistent with a failure to model stochastic volatility.

\section{Spectral transformations of PIT exceedances}
\label{sec:transf-PIT}
The tests in this paper are based on transformations of indicator
variables for PIT exceedances.  The transformations take the form
\begin{equation}\label{eq:15}
  W_t = \int_{[0, 1]} \indicator{P_t \geq u} \rd \nu(u)
\end{equation}
where the \textit{kernel measure} $\nu$ is a \LeSt\ measure defined on
$[0,1]$.  The kernel measure is designed to apply weight to the probability levels of greatest
interest, typically (in practice) in the region of the standard VaR level 
$\alpha=0.99$.   With any \LeSt\ measure $\nu$ on domain $[0,1]$, there is an associated 
increasing right-continuous function $G_\nu$ such that $\nu([0,u]) = G_\nu(u)$. 
It is easily seen that \eqref{eq:15} is equivalent to the closed-form expression 
\begin{equation}\label{eq:W-closed-form}
  W_t = \nu([0,P_t]) = G_\nu(P_t)
\end{equation}
which shows that $W_t$ is increasing in $P_t$.   The measure can be normalized
such that $G_\nu(1)=1$ without loss of generality, but we do not require it.
To streamline the presentation, we will
henceforth impose the following mild regularity condition on $\nu$.
\begin{assumption}
\label{ass:almostcontinuous}
$\nu(\{0\})=\nu(\{1\})=0$ and $G_\nu$ is differentiable except at a finite set of
 points. 
\end{assumption}

The kernel measure can be discrete, continuous or mixed.  In the discrete case, it takes the 
form $\nu = \sum_{i=1}^m \dnudiscrete_i
\delta_{\alpha_i}$ for $m\geq 1$ where $\delta$ denotes Dirac
measure.  This places positive mass $\dnudiscrete_1,\ldots, \dnudiscrete_m$ at the
ordered values $0<\alpha_1<\cdots<\alpha_m<1$ leading to
\begin{equation}\label{eq:16}
  W_{t} = \sum_{i=1}^m \dnudiscrete_i \indicator{P_t \geq \alpha_i}.
\end{equation}
For the continuous case, the measure has density $\rd\nu(u) =  g_\nu(u) \rd u$ for
some nonnegative $g_\nu(u)$ defined on $[0, 1]$ which we refer to as the \textit{kernel density}. 
The univariate transformation extends naturally to the 
multivariate case in which a set of distinct kernel
measures $\nu_1, \ldots,\nu_\dimW$ is applied to PIT-values to obtain
the vector-valued variables $\bm{W}_1\,\ldots,\bm{W}_n$ where
\begin{equation}
  \label{eq:20}
  \bm{W}_t = (W_{t,1},\ldots,W_{t,\dimW})^\prime, \quad W_{t,j} 
   = \nu_j([0,P_t]) = G_j(P_t),\; j=1,\ldots,\dimW.
\end{equation}

We will refer to any backtest based on spectrally transformed PIT
exceedances as a \textit{spectral backtest}.  For the purposes of this paper, we assume that
the regulator can utilize only present and past values 
of $P_t$ in the backtest statistic.  This restriction could be relaxed considerably.\footnote{Our approach could easily
   be generalized to incorporate information in $(L_t,
   \widehat{\VaR}_{\alpha,t})$ and in publicly observed market
   variables (such as VIX). However, frequent change in
  portfolio composition implies that lagged VaR values are less reliably 
  informative than lagged PIT values.} What is essential to our contribution is that the
 regulator does not observe the entire distribution $\hat{F}_t$, but
 does observe more than the VaR exception indicator $\indicator{L_t
 \geq \widehat{\VaR}_{\alpha,t}}$.  
 
 Let $(\testfiltration_t)$ be the regulator's filtration generated by the 
 PIT values, i.e.,~$\testfiltration_t =\sigma(\{P_s: s\leq t\}) \subset
 \bankfiltration_t$.  Regardless of the form of the test, 
the null hypothesis is 
\begin{equation}
  \label{eq:nullhypothesis}
 H_0:\quad \bm{W}_t \sim F_{W}^0\; \text{and}\; \bm{W}_t \perp\!\!\!\perp
   \testfiltration_{t-1}, \; \forall t,
 \end{equation}
 where $F_{W}^0$ denotes the distribution function of
 $\bm{W}_t$ when $P_t$ is uniform.  The null
hypothesis~\eqref{eq:nullhypothesis} implies that the
$(\bm{W}_t)$ are iid but  
is weaker than a null hypothesis that the $(P_t)$ are iid Uniform.  This is by intent.  
Since the regulator is free to choose $\nu$ in accordance with her priorities, she should not object to 
departures from uniformity and serial independence that arise outside
the support of her chosen kernel.

 
Several recent papers propose to correct tests of forecasts for estimation error; see, e.g., \citet{bib:escanciano-olmo-10,bib:du-escanciano-17,bib:hurlin-et-al-17}.  Implementation of these corrections generally requires
  knowledge of the forecasting model and estimation scheme
  and is thus infeasible in the regulatory context we describe.  Our null hypothesis imposes the high standard that the forecaster is an ideal forecaster working with a sequence of correctly specified, perfectly estimated models.

The spectral class encompasses a great variety of tests but we prioritize two
general testing approaches: Z-tests and likelihood ratio (LR) tests. In the univariate 
case, the spectral Z-test is based on the asymptotic normality of $\average{W}_n = 
n^{-1}\sum_{t=1}^n
  W_t$ under the null hypothesis~\eqref{eq:nullhypothesis}. 
  Writing $\mu_W =\E(W_t)$ and
  $\sigma^2_W = \var(W_t)$ for the moments in the null
  model $F_W^0$, it follows from the central limit theorem that
  \begin{equation}\label{eq:2}
  Z_n = \frac{\sqrt{n}(\average{W}_n - \mu_W)}{\sigma_W} \xrightarrow[n\to 
\infty]{d} N(0,1).
  \end{equation}
In the multivariate case ($\operatorname{dim} \bm{W}_t =\dimW$) we have
 $ \sqrt{n}\left(\average{\bm{W}}_n-\bm{\mu}_W\right) \xrightarrow[n\to 
\infty]{d} N_\dimW(\bm{0},\Sigma_W)$
where $\average{\bm{W}}_n = n^{-1}\sum_{t=1}^n
  \bm{W}_t$ and $\bm{\mu}_W$ and $\Sigma_W$ are the mean vector and covariance
matrix of the null distribution $F_{W}^0$.
Hence a test can be based on assuming for large enough $n$ that
  \begin{equation}\label{eq:28}
    T_n = n\left(\average{\bm{W}}_n-\bm{\mu}_W\right)^\prime
    \Sigma_W^{-1} \left(\average{\bm{W}}_n-\bm{\mu}_W\right) \sim \chi^2_\dimW,
  \end{equation}
where we refer to $T_n$ as an $\dimW$-spectral Z-test statistic.

The first moment of the transformed PIT-values under the null hypothesis is easily obtained as
\begin{equation}
\mu_W = \int_{[0,1]} (1-u) \rd\nu(u)
\label{eq:muW}
\end{equation}
The variance $\sigma_W^2$ and the cross-moments in $\Sigma_W$ are obtained using
a simple product rule for spectrally transformed PIT values.
\begin{theorem}\label{prop:product-of-Ws}
The set of spectrally transformed PIT values
defined by $W_{t,j}  = \nu_j([0,P_t])$ is closed under multiplication.  The product
$W^*_t =  W_{t,1} W_{t,2}$ is given by $W^*_t = \nu^*([0,P_t])$ where 
  $\nu^*$ is a \LeSt\ measure which satisfies
\[
\nu^*([0,u]) = 
\int_{[0,u]} \Big(\nu_2([0,s])   - \frac{1}{2}\nu_2(\{s\})\Big)  \rd\nu_1(s)
+
\int_{[0,u]}\Big( \nu_1([0,s]) - \frac{1}{2}\nu_1(\{s\})\Big) \rd  \nu_2(s). 
\]
\end{theorem}
It follows that $\sigma_W^2 =\mu_{W^*}-\mu_W^2$, where $\mu_{W^*}$ is found by applying
\eqref{eq:muW} under the measure $\nu^*$ obtained when
$\nu_1=\nu_2=\nu$. This yields
\begin{equation}
\label{eq:mustar}
  \mu_{W^*} = \int_{[0,1]}(1-u)\left(2G_\nu(u)
    -\nu(\{u\})\right)\rd \nu(u)\,.
\end{equation}

The central limit theorem underpinning the Z-test requires finite second moments.  For the univariate
case, the following proposition provides a sufficient condition on the tail behavior of $G_\nu$.
\begin{proposition} \label{prop:tailbehavior}
If $G_\nu(u)=\bigO((1-u)^{-0.5+\epsilon})$ as $u\rightarrow 1$ for some $\epsilon>0$, then
$\sigma_W^2$ is finite.
\end{proposition}
\noindent In the multivariate setting, the asymptotic distribution in \eqref{eq:28} holds if the
condition in Proposition \ref{prop:tailbehavior} are satisfied for each $\nu_j$, $j=1,\ldots,\dimW$.

Likelihood ratio tests are based on continuous parametric models $F_P(\cdot\mid\bm{\theta})$ for the PIT values
$P_t$ that nest uniformity as a special case corresponding to $\bm{\theta}=\bm{\theta}_0$. The implied model 
$F_W(\cdot\mid\bm{\theta})$ for the values $W_t=G_\nu(P_t)$ is used to test
the null hypothesis~\eqref{eq:nullhypothesis} with $F_W^0 =
F_W(\cdot\mid\bm{\theta}_0)$; the alternative is that $\bm{W}_t \sim
F_W(\cdot\mid\bm{\theta})$ with $\bm{\theta}\neq\bm{\theta}_0$. Writing $\likhood_W(\bm{\theta}\mid
     \bm{W})$ for the likelihood function, the test is based on the
     asymptotic chi-squared distribution of the statistic
     \begin{equation}\label{eq:3}
       \text{LR}_{W,n} = \frac{\likhood_W(\bm{\theta}_0\mid
     \bm{W})}{\likhood_W(\hat{\bm{\theta}}\mid\bm{W})}
     \end{equation}
where $\hat{\bm{\theta}}$ denotes the maximum likelihood estimate based on the transformed sample $(\bm{W}_t)$.

An important difference between the two classes of test is that the
Z-test is sensitive to the choice of kernel whereas
the LR-test is sensitive only to the support of the kernel. Considering the univariate case for simplicity,
we show 
\begin{theorem}\label{theorem:spectr-transf-pit}
Let $\nu_1$ and $\nu_2$ be \LeSt\ measures satisfying 
Assumption \ref{ass:almostcontinuous}, and let $W_{t,j}=G_j(P_t)$ for $j=1,2$ and $t=1,\ldots,n$ be the
respective samples of transformed PIT values. If $\supp(\nu_1) =
\supp(\nu_2)$ then $\operatorname{LR}_{W_1,n}=\operatorname{LR}_{W_2,n}$
almost surely. 
\end{theorem}
\noindent This result can be viewed as a generalization of the invariance property of LR-tests under 
one-to-one transformations of the data.

\section{Tests of unconditional coverage}
\label{sec:tests_unconditional}
It is common to divide backtesting methods into tests of unconditional
coverage and tests of conditional coverage. 
In our setting, an unconditional test is a test for the distribution $F_W^0$ implied by the 
uniformity of the PIT-values while a conditional
test is a test for both the correct distribution and the
independence of $\bm{W}_t$ and $\testfiltration_{t-1}$ for all
$t$. 

In this section we present a number of unconditional tests based on
the Z-test and LR-test ideas discussed in Section~\ref{sec:transf-PIT}.  
It is important to note that the convergence results on which these tests are 
based, although mostly stated under iid
assumptions, do hold in situations where the independence assumption
is relaxed, for example for stationary and ergodic
martingale-difference processes (according to the martingale CLT of
~\citet{bib:billingsley-61}).  In the case of the univariate Z-test, the test 
will have no power to detect serial
dependence whenever $\lim_{n\to\infty}\var(\sqrt{n}\average{W}_n) \approx \sigma^2_W$.
If, however, there is persistent positive serial
correlation in $(W_t)$ leading to $\lim_{n\to\infty}\var(\sqrt{n}\average{W}_n) 
>\sigma^2_W$ then the Z-test will have some power to detect
dependencies; however, more targeted tests of the independence property
are available and are the subject of
Section~\ref{sec:tests_independence}.

Our unconditional testing approach subsumes a number of important published
tests or close relatives thereof. The discrete weighting framework
in Section~\ref{sec:discrete-weighting}
includes the binomial LR-test of~\citet{bib:kupiec-95} and~\citet{bib:christoffersen-98}
for the number of VaR exceedances. It also includes the multilevel
Pearson chi-squared test recommended by~\citet{bib:campbell-06} and the multilevel
LR-test proposed in ~\citet{bib:perignon-smith-08} which
also underlies the work of~\citet{Colletaz2013}; see~\citet{bib:kratz-lok-mcneil-18}
for a comparative study of multilevel tests.

The continuous weighting framework in
Section~\ref{sec:continuous-weighting} builds on the seminal paper of
\citet{bib:diebold-gunther-tay-98}. It is close in spirit to the approach of~\citet{bib:crnkovicDrachman-96}, who apply a statistic based on a
weighted Kuiper distance between the distribution of PIT values and
the uniform, and subsumes the likelihood ratio test of \citet{bib:berkowitz-01} based on fitting a truncated normal distribution
to probit-transformed PIT-values.
Most closely related to our work, 
\cite{bib:du-escanciano-17} and \citet{bib:costanzino-curran-15} have proposed test statistics for
spectral risk measures which can be viewed as special cases of our
univariate spectral Z-test.

\subsection{Discrete weighting}\label{sec:discrete-weighting}
Discrete tests are based on the univariate transformation
$W_{t} = \sum_{i=1}^m \dnudiscrete_i \indicator{P_t \geq \alpha_i}$ 
as defined in~\eqref{eq:16} and the multivariate transformation  
$\bm{W}_t = (\indicator{P_t \geq \alpha_1},\ldots, \indicator{P_t \geq  \alpha_m})^\prime$ 
in~(\ref{eq:20}) for the same set of ordered levels $\alpha_1 < \cdots <  \alpha_m$.
Obviously, when $m=1$ (and $\dnudiscrete_1=1$) both transformations yield $W_t = \indicator{P_t \geq \alpha}$, so that we
obtain iid Bernoulli$(1-\alpha)$
variables under the null hypothesis~\eqref{eq:nullhypothesis}.  This is the 
basis for standard VaR exceedance testing based on the binomial
distribution. The Z-test statistic~(\ref{eq:2}) for $W_t =
\indicator{P_t \geq \alpha}$ coincides with the binomial score test statistic
\begin{equation}\label{eq:53}
 Z_n =  \frac{\sqrt{n} 
\left(\average{W}_n-(1-\alpha)\right)}{\sqrt{\alpha(1-\alpha)}}.
\end{equation}
The LR-test uses an implicit nesting model for $P_t$ in which the $W_t$ are
iid Bernoulli($p$) and tests $p=1-\alpha$ against $p\neq 1-\alpha$ by
comparing the statistic~\eqref{eq:3}  to a $\chi^2_1$ 
distribution;  this is the approach taken in~\citet{bib:kupiec-95}
and~\citet{bib:christoffersen-98}. 

 When $m>1$ the variables $W_t = \sum_{i=1}^m
 \dnudiscrete_i \indicator{P_t \geq \alpha_i}$ take the ordered values
$\nudiscrete_0 <\nudiscrete_1<\cdots <\nudiscrete_m$ where $\nudiscrete_0 = 0$ and 
$\nudiscrete_k = \sum_{i=1}^k \dnudiscrete_i$ for
$k=1,\ldots,m$. Under the null hypothesis~\eqref{eq:nullhypothesis} the distributions of $W_t$
and $\bm{W}_t$ satisfy
\begin{equation}\label{eq:26}
  \P(W_{t} = \nudiscrete_i) = \P(\bm{1}^\prime\bm{W}_t = i) = \alpha_{i+1}-\alpha_i,\quad i\in \{0,1,\ldots,m\},
\end{equation}
where $\alpha_0=0$ and $\alpha_{m+1} =1$. In both cases this describes
a multinomial distribution.

The univariate and multivariate tranformations result in
different Z-tests which can be considered as alternative
generalizations of the binomial score test~\eqref{eq:53}.  Application of
Theorem \ref{prop:product-of-Ws} to the univariate case and use of~(\ref{eq:mustar})
delivers moments under the null  $\mu_W = \sum_{i=1}^m \dnudiscrete_i(1-\alpha_i)$ and 
$\sigma^2_W =  \sum_{i=1}^m \dnudiscrete_i^* (1-\alpha_i)-\mu_W^2$
where $\dnudiscrete_i^* =(2\nudiscrete_{i} - \dnudiscrete_i)\dnudiscrete_i$.
In constructing the test statistic $Z_n$ in~\eqref{eq:2}, we can vary the
weights $\dnudiscrete_i$ to emphasize different levels $\alpha_i$
and obtain a variety of new tests.

In the multivariate case, we construct an $m$-spectral Z-test as in~\eqref{eq:28} with
$\bm{\mu}_W=(1-\alpha_1,\ldots,1-\alpha_m)^\prime$ and second moment matrix
$\Sigma_W$ with $(i,j)$ element given by  $\alpha_{i \wedge j} (1-\alpha_{i\vee  j})$. 
We then  obtain the 
classical Pearson chi-squared statistic as proposed by~\cite{bib:campbell-06}.
\begin{theorem}\label{theorem:pearson-test}
  \begin{displaymath}
    n (\average{\bm{W}}_n -\bm{\mu}_W)^\prime
\Sigma_W^{-1} (\average{\bm{W}}_n -\bm{\mu}_W) = \sum_{i=0}^m
\frac{(O_{i} - n\theta_i)^2}{n\theta_i} 
  \end{displaymath}
where $O_i = \sum_{t=1}^n \indicator{\bm{1}^\prime \bm{W}_t =i}$ and
$\theta_i = \alpha_{i+1}-\alpha_i$ for $i=0,\ldots,m$.
\end{theorem}

To implement a multinomial (or multi-level) LR-test of~(\ref{eq:26})
we use a nesting model for $P_t$ in
which  $\P(W_{t} = \nudiscrete_i) = \P(\bm{1}^\prime\bm{W}_t = i) =
p_i$ and $\sum_{i=0}^m p_i =1$. The likelihoods based on $(W_t)$ and
$(\bm{W}_t)$ yield the same
sufficient statistics $O_i = \sum_{t=1}^n \indicator{W_t = \Gamma_i}
=\sum_{t=1}^n \indicator{\bm{1}^\prime\bm{W}_t =i}$ for the cell
probabilities $p_i$. By the likelihood principle the univariate and multivariate LR-tests are
identical and depend only on the levels $(\alpha_1,\ldots,
\alpha_m)$ and not the weights $\dnudiscrete_i$ in
the univariate transformation. The invariance of the univariate
LR-test under different choices for the weights is also a consequence of Theorem \ref{theorem:spectr-transf-pit}.

\subsection{Continuous weighting}\label{sec:continuous-weighting}

Consider a kernel measure $\nu$ with associated absolutely
continuous $G_\nu$.  We assume that the kernel density satisfies $g_\nu(u)>0$ for
$\alpha_1<u<\alpha_2$ and $g_\nu(u) =0$
for $u<\alpha_1$ and $u > \alpha_2$. We refer to $\supp(\nu) = [\alpha_1,\alpha_2]$
as the \textit{kernel window}.

When $\nu_1$ and $\nu_2$ are both continuous kernels with the same
kernel window, Theorem~\ref{prop:product-of-Ws}
simplifies.  The kernel $\nu^*$ for the product $W^*_t=W_{t,1}W_{t,2}$ is continuous on the same kernel
window with density
$g^*(u) = G_1(u)g_2(u) + G_2(u)g_1(u)$.  Moments and cross-moments can be obtained analytically 
for a wide variety of kernel densities, e.g., based on polynomials, exponential functions, or on 
beta-type densities of the form $(u-\alpha_1)^{\betaa-1}(\alpha_2-u)^{\betab-1}$
for $\betaa,\betab>0$; see Section \ref{sec:mcUnconditional} for
examples of new tests based on this idea. 
Thus, our compact presentation of the continuous spectral Z-test subsumes
a very large class of possible tests.

For the LR-test, recall that we require a family  of distributions $F_P(\cdot\mid\bm{\theta})$ for the PIT values
that nests uniformity as a special case corresponding to
$\bm{\theta}=\bm{\theta}_0$.
Since $\supp(\nu)=[\alpha_1,\alpha_2]$ for all the kernels of this section,
Theorem~\ref{theorem:spectr-transf-pit} implies that they all give
rise to identical LR-tests, depending only on the kernel window
$[\alpha_1,\alpha_2]$ and the nesting model
$F_P(\cdot\mid\bm{\theta})$.  The form taken by $G_\nu$ on $[\alpha_1,
\alpha_2]$ is immaterial.

Drawing upon the probitnormal model, we assume that the
PIT values $P_1,\ldots,P_n$ have a distribution
satisfying $\Phi^{-1}(P_t) \sim N(\mu,\sigma^2)$.
Writing $\bm{\theta}=(\mu,\sigma)^\prime$, the
distribution function and density of $P_t$ are respectively
\begin{equation}\label{eq:17}
 F_P(p\mid \bm{\theta}) = \Phi\Bigg(\frac{\Phi^{-1}(p)-\mu}{\sigma}\Bigg),\quad
 f_{P}(p\mid \bm{\theta})
=\frac{\phi\Big(\frac{\Phi^{-1}(p)-\mu}{\sigma}\Big)}{\phi(\Phi^{-1}(p))\sigma},
\quad p \in [0,1],
\end{equation}
and the uniform distribution corresponds to
$\bm{\theta}=\bm{\theta}_0=(0,1)^\prime$.  The test of~\cite{bib:berkowitz-01} is a
special case of the LR-test under this nesting model: choosing the kernel
$\nu$ with $G_\nu(u) = \Phi^{-1}(\alpha_1 \vee u) - \Phi^{-1}(\alpha_1)$ and kernel window
$[\alpha_1,1]$, we observe that $W_t = G_\nu(P_t)$ has a
$N(\mu,\sigma^2)$ distribution truncated to
$[\Phi^{-1}(\alpha_1),\infty)$ and then translated to the left. 

The probitnormal model truncated to
the window $[\alpha_1,\alpha_2]$ also yields a further new bispectral
Z-test based on a classical score test of $\bm{\theta}=\bm{\theta}_0$ against the alternative
$\bm{\theta}\neq\bm{\theta}_0$. Full details of this test are found in
our companion paper [\supplementary]. The resulting \textit{truncated
  probitnormal score test} 
has a mixed weighting scheme (with discrete and continuous parts) in which
\begin{equation}\label{eq:9}
  W_{t,i} = \dnudiscrete_{i,1}\indicator{P_t \geq \alpha_1} + \dnudiscrete_{i,2}\indicator{P_t \geq
  \alpha_2} + \int_{\alpha_1}^{\alpha_2}g_i(u)\indicator{P_t\geq u}\rd
u,\quad i=1,2,
\end{equation}
for known constants $\dnudiscrete_{i,1}\geq 0$,
$\dnudiscrete_{i,2}\geq 0$ and known functions
$g_i(u)$ which are positive and differentiable on $[\alpha_1, \alpha_2]$.

\noteA[inline]{NOTE:
For completeness and to preserve blinding, material drawn from the
``companion paper'' is
appended to this paper as ``\supplementary''.}

\subsection{Size and power}
\label{sec:mcUnconditional}
We have performed extensive Monte Carlo analyses of the size and power of unconditional spectral 
backtests. In this section, we offer representative examples which illustrate how the size and power of Z-tests 
depend on the kernel, and then
briefly summarize other key findings.  Full details of all simulation
experiments may be found in the companion paper [\supplementary].

We consider kernels of discrete, continuous and mixed form.  
Parameters $\alpha_1$ and $\alpha_2$ control the kernel window. For the continuous
tests, $\alpha_1$ and $\alpha_2$ are the infimum and supremum of
the kernel support. For the discrete case, we consider 3-level kernels at the set of points 
$(\alpha_1, \alpha^*, \alpha_2)$, where $\alpha^*=0.99$ is the conventional VaR level. 
We define a \textit{narrow} window for which $\alpha_1=0.985$ and  $\alpha_2=0.995$,
and a \textit{wide} window for which $\alpha_1=0.95$ and  $\alpha_2=0.995$.  Observe that
the narrow window is symmetric around $\alpha^*$, whereas the wide window is asymmetric.

For the continuous case, there is a wide variety of plausible candidates for the kernel.  
Table \ref{table:kerneldensity} lists the kernels that we discuss below; each may be thought of 
as describing a family of kernel densities for different windows $[\alpha_1, \alpha_2]$.  For parsimony, 
all are special cases of the beta kernel.  The uniform and hump-shaped 
Epanechnikov kernels are commonly used in the nonparametric statistics 
literature. In the \supplementary,
we provide analytical solutions for the moments of transformed PIT
values for the general beta$(\betaa,\betab)$ case.
\begin{table}[htbp]
  \centering
  \begin{tabular}{r l l c}
    \toprule
   Kernel family & Mnemonic & Density $g(u)$ & Beta representation \\  \midrule
   Uniform & ZU & $1$ & 1,1 \\
   Arcsin & ZA & $1/\sqrt{u^*(1-u^*)}$ & \diagonehalf,\diagonehalf\\
   Epanechnikov & ZE & $1-(2u^*-1)^2$ & 2,2 \\
   Linear increasing & ZL$_{+}$ & $u^*$ & 2,1 \\
   Linear decreasing & ZL$_{-}$ & $1-u^* $ & 1,2 \\
    \bottomrule
  \end{tabular}
  \caption{Kernel density functions on $[\alpha_1,\alpha_2]$.\\
    $u^*$ denotes the rescaled value $u^*=(u-\alpha_1)/(\alpha_2-\alpha_1)$. 
  Density functions are not scaled to integrate to 1.}
  \label{table:kerneldensity}
\end{table}

We next list the backtests to be implemented, assigning to each a
unique mnemomic.
\begin{description}
\item[Binomial score test:] the two-sided binomial score test at level $\alpha^*$ (BIN);
\item[Multinomial tests:]  the 3-point discrete uniform kernel (ZU3) and 
                  3-point Pearson test (PE3);
\item[Continuous spectral tests:] univariate tests based on the uniform kernel (ZU); the arcsin kernel (ZA); Epanechnikov kernel (ZE);  and increasing (ZL$_+$) and decreasing (ZL$_{-}$) linear kernels;
\item[Continuous/mixed bispectral tests:] we
  combine the increasing and decreasing linear kernels (ZLL) and we also apply the  truncated
  probitnormal score test  (PNS).
\end{description}

We consider three different choices for the cdf $\losscdf$ of the \textit{true model} of $L_t$: the standard normal, the scaled $t_5$ and scaled  $t_3$. The Student $t$ distributions are scaled to have variance one so differences stem from different tail shapes rather than different variances.  We take the \textit{risk manager's model}  $\modelcdf$ to be the standard normal, i.e., we transform the sampled $L_t$ to PIT-values as $P_t=\Phi(L_t)$.  Therefore, when the samples of $L_t$ are drawn from the standard normal, the PIT-values are uniformly distributed and are used to evaluate
the size of the tests.  The PIT samples arising from the Student $t$
distributions show the kind of departures from uniformity that are
observed when the risk manager's model is too thin-tailed.  

We fix a sample size $n=750$ corresponding approximately to the three-year
samples of bank data studied in Section \ref{sec:empirical}.  In Table \ref{table:unconditionalZ},
we report the percentage of rejections of the null hypothesis at the 5\% confidence level based 
on $2^{16}=\num{65536}$ replications.  All reported $p$-values are
based on two-sided tests, though one-sided versions of some tests are
of course available.  

\begin{table}[htbp]
  \centering
  \begin{tabular}{*{2}{l}*{10}{r}}
    \toprule
    window & \( F \) \textbar\ kernel & \multicolumn{1}{c}{BIN} & \multicolumn{1}{c}{ZU3} & \multicolumn{1}{c}{PE3} & \multicolumn{1}{c}{ZU} & \multicolumn{1}{c}{ZA} & \multicolumn{1}{c}{ZE} & \multicolumn{1}{c}{\ZLp} & \multicolumn{1}{c}{\ZLn} & \multicolumn{1}{c}{ZLL} & \multicolumn{1}{c}{PNS} \\
    \midrule
    narrow & Normal & 6.1 & 4.9 & 5.3 & 4.7 & 4.7 & 4.7 & 4.6 & 4.8 & 4.8 & 4.9 \\
    & Scaled t5 & 33.9 & 35.0 & 40.3 & 33.8 & 34.4 & 33.0 & 40.3 & 27.1 & 40.0 & 44.7 \\
    & Scaled t3 & 24.0 & 24.8 & 43.4 & 23.9 & 24.3 & 23.3 & 32.7 & 16.5 & 43.3 & 50.5 \\ \addlinespace[3pt]
    wide & Normal & 6.1 & 5.0 & 5.1 & 4.9 & 4.9 & 4.9 & 4.9 & 4.9 & 5.0 & 5.0 \\
    & Scaled t5 & 33.9 & 10.7 & 55.5 & 6.4 & 6.6 & 6.1 & 11.9 & 5.8 & 45.1 & 57.5 \\
    & Scaled t3 & 24.0 & 13.5 & 90.6 & 17.7 & 20.4 & 15.4 & 7.4 & 31.9 & 85.8 & 93.1 \\
    \bottomrule
  \end{tabular}
  \caption{Estimated size and power of unconditional Z-tests.\\  We report the percentage of rejections of the null hypothesis at the 5\% confidence level based on $2^{16}=\num{65536}$ replications. The number of days in each backtest sample is $n=750$. The narrow window is [0.985, 0.995] and the wide window is [0.95, 0.995].}
  \label{table:unconditionalZ}
\end{table}

In both narrow and wide windows, we observe that the size of the Z-tests is very close to the nominal size of
5\%, except in the case of the binomial score test, which is slightly
oversized.  The power of the tests, in contrast, is sensitive to the choice of kernel. We summarize the results as follows:
\begin{enumerate}
\item Differences across tests in power are more pronounced on the
  wide window than on the narrow window. The monospectral tests (BIN, ZU, ZA, ZE, \ZLp, \ZLn)  are broadly similar in power to the binomial score test on the narrow window.
\item The monospectral tests offer more power against the scaled $t_5$ model than the more fat-tailed scaled $t_3$ model on the narrow window, but the opposite is true in most cases on the wide window.
\item Increasing the window width \textit{reduces} the power of most of the monospectral tests. 
\item For the wide window, the increasing linear kernel \ZLp\ offers more power than the decreasing linear kernel \ZLn when the true model is the scaled $t_5$, but the opposite holds when the true model is the scaled $t_3$.
\item The multispectral (PE3, ZLL, PNS) tests offer more power than the monospectral tests, but the differences are relatively small in the case of the narrow window when the true model is the scaled $t_5$.
\end{enumerate}

The first finding is easily understood. For the families of kernel densities in Table \ref{table:kerneldensity}, 
the associated function $G_\nu$ converges to a step function as the window narrows.  Put another way, all 
kernel families degenerate to the binomial score kernel 
as the window shrinks around $\alpha^*$.  To illuminate the second finding, we plot in Figure \ref{fig:dgpcdf} the distribution function for reported PIT-values under each of the true models, i.e.,
$\Pr(P_t\leq u)= \Pr\left(L_t\leq \Phi^{-1}(u)\right) = \losscdf\left(\Phi^{-1}(u)\right)$. 
The cdf is simply the identity line ($y=x$) when the null hypothesis
is true.  Within the narrow window of $[0.985, 0.995]$, the cdf for
the scaled $t_3$ lies closer to the identity line on average than does
the cdf for the scaled $t_5$. The monospectral tests, being 
  tests of the first moment of $W_t = G_\nu(P_t)$, are sensitive to
this distance, so have greater power against the scaled $t_5$ than the scaled $t_3$. On the wide window, however, the 
cdf for the scaled $t_5$ lies closer to the identity line on average, so the tests have greater power against the scaled $t_3$.

The figure also illuminutes the third and fourth findings.  Both of the scaled Student $t$ cdfs cross the identity line outside the boundaries of the narrow window, but near the middle of the wide window.  Crossings within the window reduce the average distance, so pose a particular challenge for the monospectral tests. On the wide window, the scaled $t_5$ cdf crosses the identity line near 0.971, which is slightly below the midpoint.  This slight asymmetry favors the \ZLp\ kernel, which puts heavier weight on the upper side of the window.  The scaled $t_3$ cdf crosses the identity line above the midpoint (near 0.982), and furthermore the distance between the cdf and identity line is much larger at the lower end of the window.  This asymmetry favors the \ZLn\ kernel.

Finally, the greater power of the multispectral tests is most apparent when the kernel window contains a crossing of the type just described.  While the crossing reduces the average distance between the cdf of the reported PIT-values and the identity line, the cdf will be too steep or too shallow.  Cross-moments of the kernels in a multispectral test can effectively detect such a \textit{slope violation}.  In contrast, when the cdf of the reported PIT-values lies roughly parallel to the identity line throughout the kernel window (as is the case for the scaled $t_5$ in the narrow window), the advantage of the multispectral test is expected to be less pronounced.

\begin{figure}[ht]
     \includegraphics[height=.46\textheight]{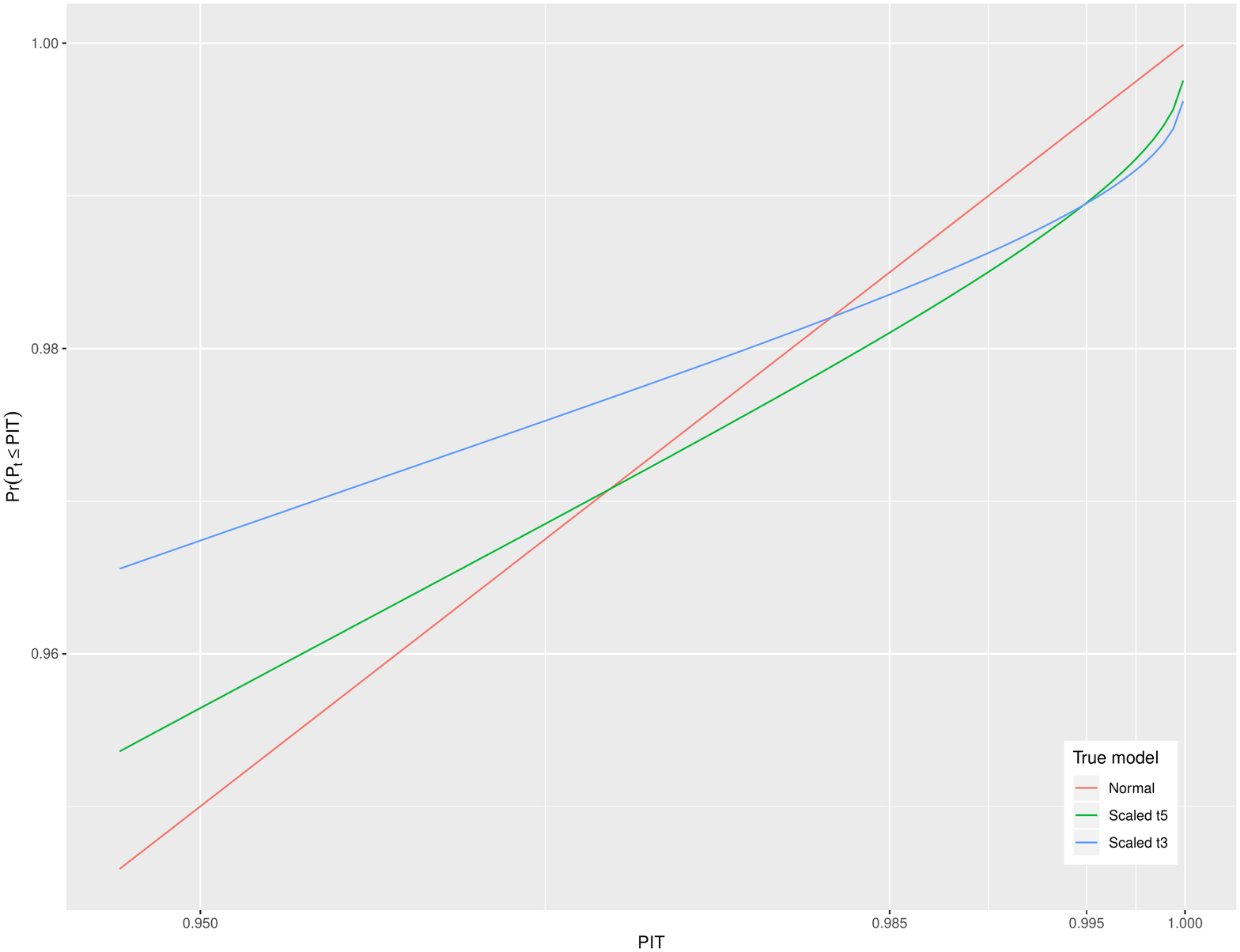}
     \caption{Distribution functions for reported PIT-values.\\
       CDFs for the reported PIT-values when the  risk manager assumes standard normal losses ($\modelcdf=\Phi$) but the true loss model  $\losscdf$ is standard normal (red line), scaled $t_5$ (green liine) or scaled $t_3$ (blue line).}
  \label{fig:dgpcdf}
\end{figure}

In our companion paper [\supplementary], we study three additional
dimensions of test design.  First within the set of discrete kernel Z-tests, we find that the 3-level tests considered in Table \ref{table:unconditionalZ} 
perform very similarly to their 5-level counterparts.  Second, our summary conclusions on Table \ref{table:unconditionalZ} are robust to the choice of backtest sample size $n$. Third, and
most importantly, we find that the multinomial Z-tests and the
probitnormal score test outperform their corresponding likelihood
ratio tests.   The Z-tests are similar in power to the LR-tests, but
the LR-tests are less well sized.\footnote{The superior performance of the Z-tests holds with a smaller backtest sample of $n=250$ as well.}   Therefore, we will henceforth limit our attention to the class of Z-tests.

As a general caveat, we do not advocate that power alone should
dictate the choice of kernel. A general
intuition from our simulation studies is that a test is most powerful
in rejecting a false model when the kernel weights heavily on
probability levels for which the quantiles of the risk manager's
model diverge from the true quantiles. As historical simulation in particular tends to understate the tails of the distribution, in practice we expect that the most powerful tests will weight heavily on extreme probability levels.  However, this can come at the expense of the stability of the test, in the sense that the outcome can be determined by the presence or absence of one or two very large reported 
PIT-values. Furthermore, testing at very extreme tail values of $\alpha$ runs counter to a primary regulatory motivation for the backtest, which is to verify the bank's 99\% VaR.  

\section{Tests of conditional coverage}\label{sec:tests_independence}
While the unconditional tests of
Section~\ref{sec:tests_unconditional} have some limited power to detect the
presence of serial dependencies, the aim in this section is to propose 
conditional extensions of our spectral tests that \emph{explicitly}
address the independence of $W_t$ and $\testfiltration_{t-1}$ \emph{as
  well as} the correctness of the distribution of $W_t$.
These tests should have more power to detect departures
from the null hypothesis resulting from a bank's failure to use all the
information in $\bankfiltration_{t-1}$ when building the predictive model
$\modelcdf_t$, such as a failure
 to address time-varying volatility in adequate fashion.

Our tests of conditional coverage extend the regression-based
approach to testing conditional coverage of VaR estimates and offer
many new possibilities. They include a variant on the 
widely-applied test of~\citet{bib:christoffersen-98}
and subsume the dynamic quantile (DQ) test of
\citet{bib:engle-manganelli-04}. The Christofferson test is an LR-test
of first-order Markov dependence in VaR exceedances and has been generalized to a
multilevel test by~\citet{Leccadito2014}. In the DQ test, VaR exceedance indicators are
regressed on lagged exceedance indicators and VaR  estimates
to assess whether exceedances occur independently at
the desired rate.

\subsection{Tests of the martingale difference property}
\label{sec:tests-indep-based}
A necessary condition for null hypothesis \eqref{eq:nullhypothesis} to hold is the martingale
difference (MD) property with respect to the regulator's filtration $(\testfiltration_t)$:
\begin{equation}
  \label{eq:36}
 E(W_t-\mu_W \mid \testfiltration_{t-1}) = 0
\end{equation}
where we recall that $\testfiltration_t=\sigma(\{P_s : s \leq t\})$.
When the MD property~\eqref{eq:36} holds, we must have
$E(h_{t-1}(W_t-\mu_W))=0$ for any 
$\testfiltration_{t-1}$-measurable random variable $h_{t-1}$. Using a
function $\lagpfunc$, which we refer to as a \textit{conditioning
variable transformation} (CVT), we
form the $k+1$-dimensional lagged vector 
$\bm{h}_{t-1}= (1,\lagpfunc(P_{t-1}),\ldots,\lagpfunc(P_{t-k}))^\prime$.
To guarantee the existence of the second moment
of $\bm{h}_{t-1}$, we assume that $(P_t)$ is covariance-stationary and
that  $\lagpfunc$ is bounded. Particular examples that 
we will use in our empirical analysis are $\lagpfunc(p) =\indicator{p \geq
  \alpha}$ for some $\alpha$ and $\lagpfunc(p) =  |2p -1|^\hpower$ for
$\hpower>0$. 


For convenience, let $(\zeromean{W}_t)$ denote the sequence of
transformed reported PIT-values $\zeromean{W}_t = W_t-\mu_W$ centered
at their theoretical mean $\mu_W$.   
We base our test on the vector-valued process
$\bm{Y}_t = \bm{h}_{t-1}\zeromean{W}_t $ for $t= k+1,\ldots,n$.  Under
the null hypothesis~\eqref{eq:nullhypothesis}, $(\bm{Y}_t)$ is a MD
sequence satisfying $\E(\bm{Y}_t\mid\testfiltration_{t-1}) =\bm{0}$.
We want to test that
$\bm{Y}_{k+1},\ldots,\bm{Y}_n$ are close to the zero vector on
average.  We apply the conditional predictive test of~\citet{bib:giacomini-white-06}
which was developed for comparing forecasting methods, and which has
also been used by~\citet{bib:nolde-ziegel-17} in the backtesting
context.  Let
$\average{\bm{Y}}_{n,k} = (n-k)^{-1}\sum_{t=k+1}^n \bm{Y}_t$ and let
$\hat{\Sigma}_Y$ denote a consistent estimator of
$\Sigma_{Y}:= \cov(\bm{Y}_t)$.  Giacomini and White show that under very
weak assumptions, for large enough $n$ and fixed $k$,
\begin{equation}\label{eq:32}
 (n-k)\; \average{\bm{Y}}_{n,k}^\prime \; \hat{\Sigma}_{Y}^{-1} \; 
\average{\bm{Y}}_{n,k}
  \sim \chi^2_{k+1}.
\end{equation}
\citet{bib:giacomini-white-06} use the estimator $\hat{\Sigma}^{\scriptscriptstyle GW}_Y = (n-k)^{-1}\sum_{t=k+1}^n \bm{Y}_t \bm{Y}_t^\prime$
but we can use the fact that $\E(\zeromean{W}_t^2 \mid \testfiltration_{t-1})=\sigma_W^2$ for all $t$ under the null
hypothesis~\eqref{eq:nullhypothesis} to form an alternative estimator.
We compute that
\begin{align}
  \Sigma_Y = \E(\cov(\bm{Y}_t \mid \testfiltration_{t-1})) &= \E \left(\E
  \left( \bm{Y}_t \bm{Y}_t^\prime \mid \testfiltration_{t-1}\right)\right) \nonumber \\
  &= \E\left(\bm{h}_{t-1}\bm{h}_{t-1}^\prime\E\left(\zeromean{W}_t^2 \mid                                                                         
              \testfiltration_{t-1}\right)\right) = \sigma^2_W \lagpfuncmat
          \label{eq:SigmaYnull}
\end{align}
where $\lagpfuncmat = \E\left(\bm{h}_{t-1}\bm{h}_{t-1}^\prime\right)$,
which suggests the estimator $\hat{\Sigma}_Y = \sigma_W^2
\hat{\lagpfuncmat}$ where\footnote{It would also be possible to use a heteroscedasticity
  and autocorrelation-consistent (HAC) estimator of $\Sigma_Y$~\citep{bib:newey-west-87,bib:andrews-91}. 
  Our preferred estimator has the advantage of higher sensitivity to deviations from the null hypothesis of zero
 serial correlation.}
\begin{equation}
  \label{eq:47}
  \hat{\lagpfuncmat} =  (n-k)^{-1}\sum_{t=k+1}^n \bm{h}_{t-1}\bm{h}_{t-1}^\prime.
\end{equation}

The decomposition in~\eqref{eq:SigmaYnull} has the advantage that it
generalizes our unconditional spectral Z-test, which corresponds to the case $k=0$. 
The case $k=1$ may be viewed as a Z-test version of the first-order
Markov chain test of~\citet{bib:christoffersen-98}.  To see that the
conditional test also embeds the DQ test
statistic proposed by~\citet{bib:engle-manganelli-04}, let 
$X$ be the $(n-k)\times(k+1)$ matrix whose rows are given by
 $\bm{h}_{t-1}$ for $t=k+1,\ldots,n$ and
let $\bm{\zeromean{W}}=(\zeromean{W}_{k+1},\ldots,\zeromean{W}_n)^\prime$. 
It follows that 
\[ \hat{\Sigma}_Y = \sigma_W^2 (n-k)^{-1}\sum_{t=k+1}^n 
\bm{h}_{t-1}
\bm{h}_{t-1}^\prime = \sigma_W^2 (n-k)^{-1} X^\prime X \] and
$\average{\bm{Y}}_{n,k} = (n-k)^{-1} X^\prime \bm{\zeromean{W}}$ so 
that~\eqref{eq:32} may be rewritten as
\begin{equation}
  \label{eq:35}
  \sigma_W^{-2}\bm{\zeromean{W}}^\prime X (X^\prime X)^{-1} X^\prime 
\bm{\zeromean{W}}
  \sim \chi^2_{k+1}.
\end{equation}
The DQ test corresponds
to the binomial score case, i.e., the case where
$W_t= \indicator{P_t \geq \alpha}$ and 
the CVT is $\lagpfunc(p) =\indicator{p \geq
  \alpha}$\footnote{\citet{bib:engle-manganelli-04} allow as well  for
  lagged VaR values to be included as regressors, an extension
  possible in our framework, but change in
  portfolio composition implies that lagged VaR values are less
  informative than lagged PIT values.}'
 
 It is straightforward to generalize the conditional spectral Z-test to a conditional bispectral Z-test; see
 Appendix \ref{app:cond-bisp-test}.

\subsection{Size and power}
\label{sec:mcConditional}

We build on the Monte Carlo exercises of Section
\ref{sec:mcUnconditional} to study the size and power of the
conditional tests of coverage. Here we present a representative
extract of simulation studies documented in our companion paper [\supplementary].  
The data are generated from three different ``true'' models: iid standard normal; a
time series model with underlying ARMA(1,1) structure and standard normal marginals; and a
model with underlying ARMA(1,1) structure and scaled $t_5$ marginal distribution.  Our calibration of the
ARMA parameters (AR = 0.95, MA = -0.85) is described in the companion
paper [\supplementary] and is designed to mimic the serial dependence
in PIT values when stochastic volatility is neglected.   
As in Section \ref{sec:mcUnconditional}, we assume the risk manager reports PIT-values based on the
standard normal model $\modelcdf=\Phi$. 


In addition to a choice of kernel, the MD test requires the choice of the number ($k$) of lagged PIT values and the conditioning variable transformation $\lagpfunc(P)$.    Define $V(u)=\vert 2u-1\vert$; this V-shaped transformation of PIT values is well-suited to uncover dependence arising from stochastic volatility.   As listed in Table \ref{table:cvt}, we consider four candidates for the CVT.  Whereas the \EMone\ requires only a time-series of traditional exceedance indicators, the three CVT based on the $V(u)$ transformation require that the regulator observe PIT values. 

\begin{table}[htbp]
  \centering
  \begin{tabular}{lll}
    \toprule
   Mnemonic & $\lagpfunc(P)$ & Description \\
    \midrule
  \EMone & $\indicator{P\geq 0.99}$ & Flags upper-tail PIT values, as in \citet{bib:engle-manganelli-04}.\\
  \EMtwo & $\indicator{V(P)\geq 0.98}$ &  Two-tailed version of \EMone, flags PIT values near zero or one.  \\
  \MDfour & $V(P)^4$ & Places heavier weight on tail PIT values in the recent past.\\
  \MDhalf & $\sqrt{V(P)}$ & Dampens sensitivity to tail PIT values relative to \MDfour.\\
    \bottomrule
  \end{tabular}
  \caption{Conditioning variable transformations. $V(u)\equiv \vert 2u-1\vert$}
  \label{table:cvt}
\end{table}

Table~\ref{table:extracted_table} gives a flavor of the
main findings for the example of the uniform kernel (ZU) and the narrow kernel window of 
$[0.985, 0.995]$. We report the percentage of rejections of the null
hypothesis at the 5\% confidence level based on $2^{16}=\num{65536}$
replications.  In the first column (CVT=``None''), we set $k=0$ to
obtain the unconditional Z-test.  Each of the remaining columns
corresponds to a CVT with $k=4$.   As seen in the first row (iid
standard normal model), size is more difficult to control in the
conditional tests. However, the CVT choices \MDfour\ and \MDhalf\ are only
slightly oversized whereas \EMtwo\ and, in particular, \EMone\ are very
oversized.

The model depicted in the second row gives uniformly
distributed PIT values with a serial dependence structure that is
typical when stochastic volatility is ignored. The power of the unconditional test in this situation is very limited (10.8\%),
while the MD tests show power ranging from 21.7\% to 32.6\%. There is a further increase in power 
when the simulated PIT data
are both non-uniform and serially dependent (third row). 

\begin{table}[ht]
\centering
\begin{tabular}{lllllll}
  \toprule
F & Serial Dependence $|$ CVT & None & \EMone & \EMtwo & \MDfour & \MDhalf \\ 
  \midrule
  Normal &             None &  4.8 & 14.4 &  9.0 &  6.7 &  6.7 \\ 
  Normal & ARMA(1, 1) & 10.8 & 31.5 & 30.9 & 32.6 & 21.7 \\ 
  Scaled t5 & ARMA(1, 1) & 36.2 & 54.9 & 52.7 & 60.7 & 54.5 \\ 
   \bottomrule
\end{tabular}
\caption{Estimated size and power of conditional tests.\\  MD tests using the ZU kernel on the narrow window [0.985, 0.995]. We report the percentage of rejections of the null hypothesis at the 5\% confidence level based on $2^{16}=\num{65536}$ replications. The number of days in each backtest sample is $n=750$. ARMA parameters are AR = 0.95, MA = -0.85.} 
\label{table:extracted_table}
\end{table}

\section{Application to bank-reported PIT values}
\label{sec:empirical}

\subsection{Data} \label{sec:empirical:data}

Our data consist of ten confidential backtesting samples provided by US banks to the Federal Reserve Board at the subportfolio level.  Mandatory reporting to bank regulators pursuant to the Market Risk Rule  took effect on January 1, 2013.   For each significant subportfolio and each business day, the bank is required to report the overnight VaR at the 99\% level, the realized clean \PnL, and the associated PIT-value \citep[p.~53105]{bib:rbcg:marketrisk}.  While the first two fields have been available to regulators for a long time (at least at an aggregate trading book level), access to PIT values is new.
Each of our ten samples represents returns on an equity or foreign
exchange subportfolio, which can include derivative as well as cash positions.  
Our samples are taken from the three-year period from 2014--2016. 

Summary statistics for the unconditional distributions are found in Table \ref{tab:summarystats}.   As is often the case with new regulatory reporting requirements, the data are not uniform in quality.  Two of the samples (coded \portfolio104 and \portfolio110) have missing values (0.9\% and 3.2\% of trading days, respectively).  Furthermore, close inspection reveals that most of the samples contain a small number of observations that are potentially  spurious. In a few extreme cases, a PIT value of 1 is matched to a realized loss smaller than the forecast VaR. We apply a heuristic procedure to identify spurious values based on the distance between the reported PIT-value and an imputed value.  The latter is constructed
using a portfolio-specific model that fits PIT to the ratio of realized loss to VaR; details are provided in Appendix \ref{app:spurious}.  In test results reported below, we treat spurious values as missing to make the tests less sensitive to 
reporting error.  Our conclusions are robust to taking all non-missing observations as valid.

Remaining columns of the table provide a histogram of PIT values.  For some portfolios, tail PIT values 
are underrepresented (e.g., \portfolio107) or overrepresented  (e.g.,  \portfolio105) in the sample.  For some other portfolios, the histograms appear to be close to uniform, e.g., for \portfolio110, 85.9\% of PIT values lie in $[0.05, 0.95)$ and 
remaining mass is distributed roughly symmetrically.  

\begin{sidewaystable}
\centering
\begin{tabular}{r rrr rrrrrrr}
 \toprule
\multicolumn{1}{c}{ID} & \multicolumn{1}{r}{Trading days} &
 \multicolumn{2}{c}{of which:} & \multicolumn{7}{c}{Frequencies} \\ 
  \cmidrule{3-4} 
  &  & Missing & Spurious & $[0,.005)$ & $[.005,.015)$ & $[.015,.05)$ & $[.05,.95)$ & $[.95,.985)$ & $[.985,.995)$ & $[.995,1]$ \\ 
  \midrule
  101 &   756 &  0 &  0 & 0.0132 & 0.0172 & 0.0357 & 0.8796 & 0.0317 & 0.0106 & 0.0119 \\ 
   102 &   751 &  0 &  7 & 0.0027 & 0.0108 & 0.0215 & 0.9113 & 0.0323 & 0.0121 & 0.0094 \\ 
   103 &   750 &  0 &  8 & 0.0040 & 0.0040 & 0.0189 & 0.9474 & 0.0162 & 0.0081 & 0.0013 \\ 
   104 &   774 &  7 &  0 & 0.0000 & 0.0000 & 0.0026 & 0.9804 & 0.0104 & 0.0013 & 0.0052 \\ 
   105 &   750 &  0 &  2 & 0.0174 & 0.0214 & 0.0294 & 0.8596 & 0.0388 & 0.0187 & 0.0147 \\ 
   106 &   629 &  0 &  1 & 0.0111 & 0.0191 & 0.0510 & 0.8328 & 0.0557 & 0.0175 & 0.0127 \\ 
   107 &   750 &  0 &  1 & 0.0000 & 0.0000 & 0.0013 & 0.9960 & 0.0027 & 0.0000 & 0.0000 \\ 
   108 &   756 &  0 &  8 & 0.0000 & 0.0053 & 0.0267 & 0.9278 & 0.0174 & 0.0120 & 0.0107 \\ 
   109 &   734 &  0 &  6 & 0.0082 & 0.0151 & 0.0495 & 0.8654 & 0.0371 & 0.0206 & 0.0041 \\ 
   110 &   774 & 25 &  0 & 0.0134 & 0.0200 & 0.0507 & 0.8585 & 0.0427 & 0.0107 & 0.0040 \\ 
   \bottomrule
\end{tabular}
\caption{Sample statistics.\\ Missing and spurious observations excluded from the reported frequencies. Sample period is 2014-01-01 to 2016-12-31.} 
\label{tab:summarystats}
\end{sidewaystable}

\subsection{Tests of unconditional coverage}

Due to the generality of our framework, application of spectral backtests to data 
involves choices along several dimensions.  As 
in Section \ref{sec:mcUnconditional}, we fix $\alpha^*=0.99$ as the 
conventional VaR level, define a \textit{narrow} window as $[0.985, 0.995]$ and
a \textit{wide} window as $[0.95, 0.995]$. Kernels are drawn from
Table \ref{table:kerneldensity}.  Guided by our simulation results and the need
for brevity, we exclusively employ two-sided Z-tests in our empirical analysis.

Table \ref{tab:unconditional} presents $p$-values for the tests of unconditional coverage.\footnote{All $p$-values in the tables below should be interpreted in the context of a single test of the null
  hypothesis. If multiple tests are conducted, inferences would have to
  be based on a standard correction method such as that of Bonferroni;
  see~\citet{bib:schaffer-95} for a review.}  We find that the forecast models for portfolios \portfolio105, \portfolio106 and \portfolio107
are rejected at the 1\% level for all kernels and on both the narrow and wide kernel windows.  In view of the histograms observed in Table \ref{tab:summarystats}, this is unsurprising.   
When an empirical distribution function (edf) lies above the uniform cdf within the kernel window (as observed for \portfolio107), large PIT values are underepresented in the sample, which suggests that the forecast model overstates the upper quantiles of the loss distribution.  When an edf lies below the uniform cdf (as observed for \portfolio105 and \portfolio106), large PIT values are overrepresented in the sample, which suggests that the forecast model understates the upper quantiles.  By contrast, there are no rejections at all for \portfolio110, for which the edf is reasonably close to the theoretical cdf throughout the upper tail.

\begin{sidewaystable}
\centering
\begin{tabular}{rlrrrrrrrrrr}
  \toprule
 ID & window & BIN & ZU3 & PE3  & ZU & ZA & ZE & \ZLp & \ZLn & ZLL &  PNS \\ 
  \midrule
  \multirow{2}{*}{101} & narrow & 0.1046 & 0.0340 & 0.0502 & 0.0356 & 0.0294 & 0.0436 & 0.0230 & 0.0572 & 0.0595 & 0.0352 \\ 
   & wide & 0.1046 & 0.1407 & 0.0593 & 0.2050 & 0.2300 & 0.1718 & 0.0610 & 0.4243 & 0.0128 & 0.0296 \\ \midrule
  \multirow{2}{*}{102} & narrow & 0.0016 & 0.0200 & 0.0027 & 0.0800 & 0.0795 & 0.0802 & 0.1346 & 0.0558 & 0.1085 & 0.2153 \\ 
   & wide & 0.0016 & 0.0063 & 0.0121 & 0.2456 & 0.2484 & 0.2797 & 0.2114 & 0.2898 & 0.4477 & 0.2238 \\ \midrule
  \multirow{2}{*}{103} & narrow & 0.1029 & 0.1158 & 0.4212 & 0.0987 & 0.1134 & 0.0882 & 0.1094 & 0.0995 & 0.2545 & 0.2804 \\ 
   & wide & 0.1029 & 0.0035 & 0.0229 & 0.0058 & 0.0041 & 0.0083 & 0.0156 & 0.0038 & 0.0126 & 0.0091 \\ \midrule
  \multirow{2}{*}{104} & narrow & 0.1829 & 0.1691 & 0.1119 & 0.1651 & 0.1678 & 0.1633 & 0.3063 & 0.0987 & 0.0767 & 0.0657 \\ 
   & wide & 0.1829 & 0.0010 & 0.0002 & 0.0004 & 0.0003 & 0.0005 & 0.0031 & 0.0002 & 0.0003 & 0.0001 \\ \midrule
  \multirow{2}{*}{105} & narrow & 0.0000 & 0.0000 & 0.0001 & 0.0000 & 0.0000 & 0.0000 & 0.0000 & 0.0000 & 0.0000 & 0.0000 \\ 
   & wide & 0.0000 & 0.0000 & 0.0002 & 0.0000 & 0.0000 & 0.0000 & 0.0000 & 0.0001 & 0.0001 & 0.0000 \\ \midrule
  \multirow{2}{*}{106} & narrow & 0.0005 & 0.0004 & 0.0055 & 0.0010 & 0.0008 & 0.0012 & 0.0019 & 0.0007 & 0.0033 & 0.0036 \\ 
   & wide & 0.0005 & 0.0000 & 0.0001 & 0.0000 & 0.0000 & 0.0000 & 0.0000 & 0.0000 & 0.0000 & 0.0000 \\ \midrule
  \multirow{2}{*}{107} & narrow & 0.0059 & 0.0018 & 0.0097 & 0.0026 & 0.0020 & 0.0032 & 0.0062 & 0.0016 & 0.0053 & 0.0033 \\ 
   & wide & 0.0059 & 0.0000 & 0.0000 & 0.0000 & 0.0000 & 0.0000 & 0.0000 & 0.0000 & 0.0000 & 0.0000 \\ \midrule
  \multirow{2}{*}{108} & narrow & 0.0166 & 0.0213 & 0.0933 & 0.0311 & 0.0290 & 0.0301 & 0.0191 & 0.0520 & 0.0482 & 0.0635 \\ 
   & wide & 0.0166 & 0.6971 & 0.0052 & 0.8422 & 0.8168 & 0.8545 & 0.5358 & 0.4445 & 0.0034 & 0.0070 \\ \midrule
  \multirow{2}{*}{109} & narrow & 0.5217 & 0.2499 & 0.0222 & 0.1358 & 0.1237 & 0.1712 & 0.3171 & 0.0634 & 0.0165 & 0.0225 \\ 
   & wide & 0.5217 & 0.2584 & 0.3373 & 0.0563 & 0.0715 & 0.0525 & 0.0503 & 0.0706 & 0.1471 & 0.2814 \\ \midrule
  \multirow{2}{*}{110} & narrow & 0.8514 & 0.9479 & 0.8692 & 0.6660 & 0.7719 & 0.6123 & 0.6321 & 0.7032 & 0.8737 & 0.9856 \\ 
   & wide & 0.8514 & 0.5407 & 0.6916 & 0.3017 & 0.2932 & 0.3442 & 0.4527 & 0.2325 & 0.3407 & 0.6253 \\ \midrule
\bottomrule
\end{tabular}
\caption{Tests of unconditional coverage.\\ We report $p$-values by portfolio, kernel window, and kernel family. Narrow kernel window is [0.985,0.995] and wide kernel window is [0.95,0.995]. Sample period is 2014-01-01 to 2016-12-31.} 
\label{tab:unconditional}
\end{sidewaystable}

For the remaining six portfolios, test results are sensitive to the choice of kernel.  
This is to be expected and desirable, as the different tests prioritize different quantiles of the unconditional distribution. To shed light on the differences, in Figure \ref{fig:empcdf} we plot the edf for three portfolios on the narrow window (upper panel) and wide window (lower panel).  This plot is the empirical counterpart to Figure \ref{fig:dgpcdf} in Section \ref{sec:mcUnconditional}. For \portfolio104, we observe that the edf intersects with the theoretical cdf at the common upper window boundary $\alpha_2=0.995$, but lies above the theoretical cdf at lower PIT values.  Over the wide window, the average distance between edf and theoretical cdf is large, so any test that assigns significant weight to PIT values near the center of the window will reject.   When restricted to the narrow window, the average distance is reduced, so the tests fail to reject. The edf for portfolio \portfolio103 (not shown) is qualitatively close to that of \portfolio104, which explains the similarity in test results.

The edf for portfolio \portfolio108 lies somewhat below and roughly parallel to the theoretical cdf throughout the narrow window.  Tests reject at the 5\% level for some of the kernels and fail to reject for others, but the $p$-values all lie between 1.5\% and 10\%.  When we consider the wide window, we find that the edf lies above the theoretical cdf in the lower half of the window and below in the upper half, which implies that  the forecast model 
underestimates quantiles at one boundary of the kernel window and overestimates quantiles at the other boundary, i.e., a slope deviation 
from the uniform cdf.   As shown in Section \ref{sec:mcUnconditional}, bispectral tests generally outperform monospectral tests in this situation.   We find that the tests based on the bivariate ZLL and PNS kernels and the trivariate Pearson kernel reject at the 1\% level.  The edf for portfolio \portfolio101 (not shown) also displays a slope violation on the wide window, and again we find the bivariate ZLL and PNS kernels most effective. 

In the case of portfolio \portfolio109, the edf displays a slope violation within the narrow window.  As before, we find that the tests based on the bivariate ZLL and PNS kernels and the trivariate Pearson kernel reject at the 5\% level, whereas the monospectral tests all fail to reject.  On the wide window, the edf lies uniformly below the theoretical cdf, so the slope violation loses salience.  The bispectral and trispectral tests now fail to reject, whereas several of the monospectral tests reject at a 10\% level.

\begin{figure}[!htbp]
     \includegraphics[height=.46\textheight]{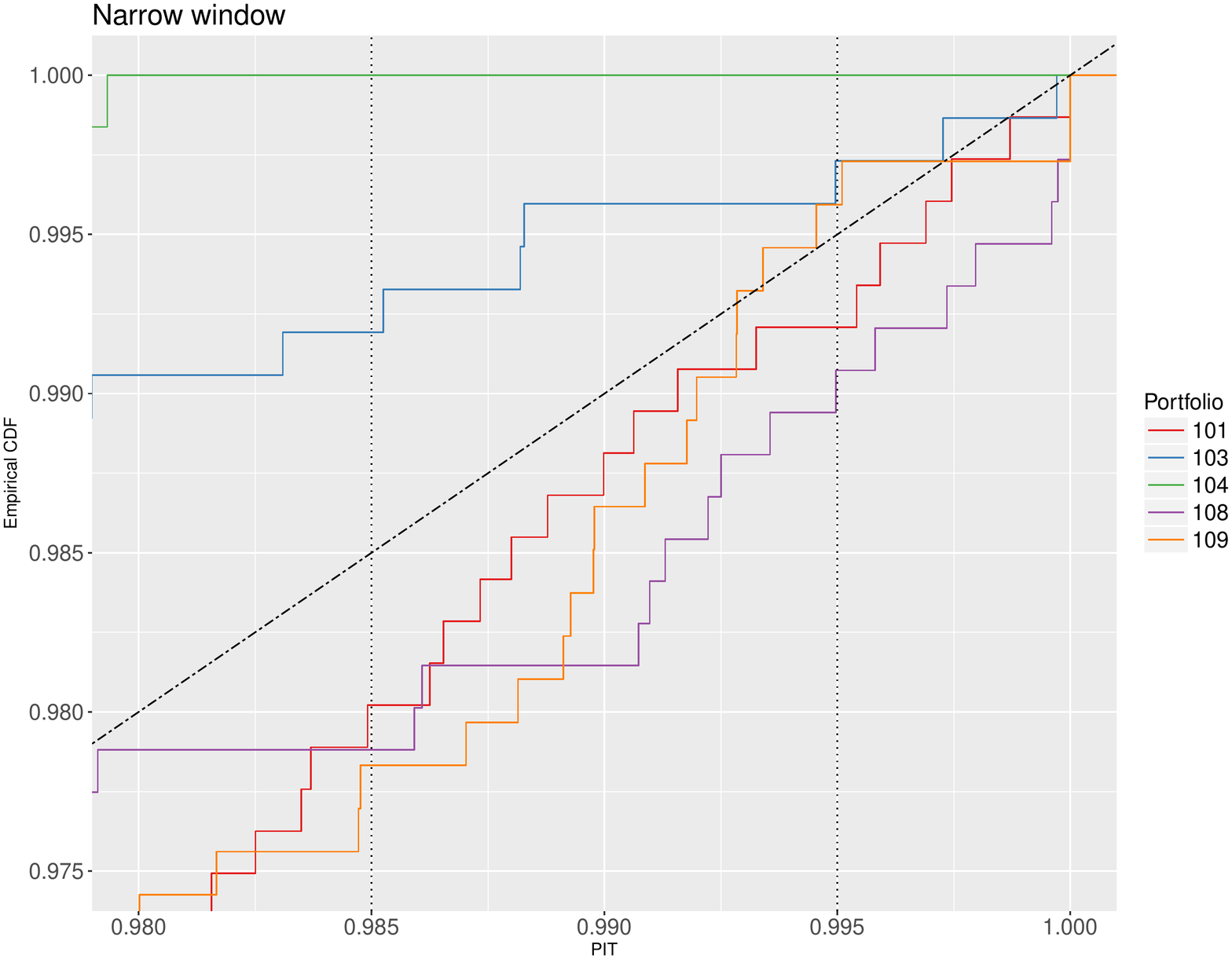}

     \includegraphics[height=.46\textheight]{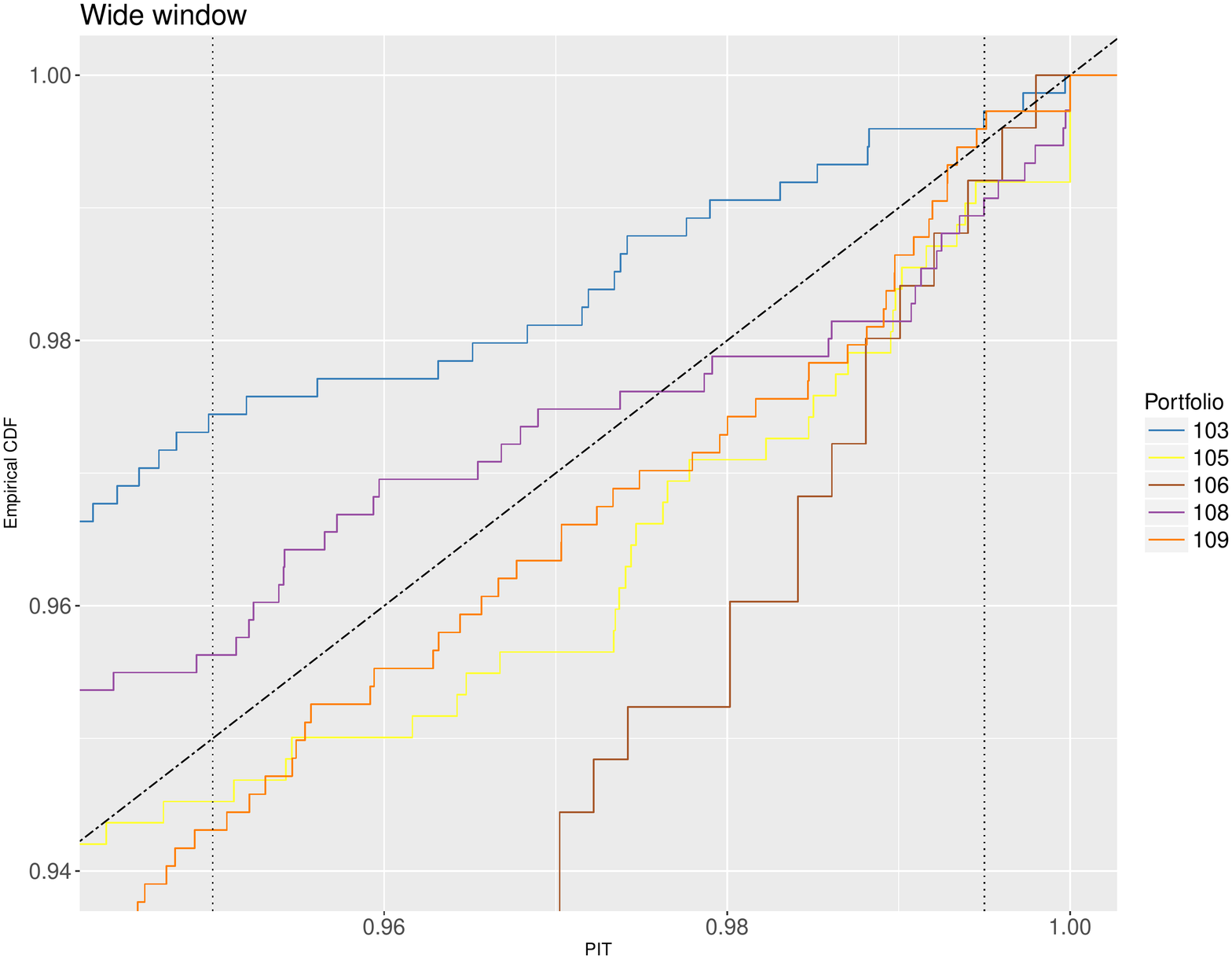}

  \caption{Empirical distribution functions for select portfolios.\\
       EDFs for narrow kernel window (upper panel) and wide kernel window (lower panel).  The uniform cdf is plotted as a dashed black line.}
  \label{fig:empcdf}
\end{figure}

\subsection{Tests of conditional coverage}
\label{sec:emp:conditional}

In this section, we emphasize the role of the conditioning variable transformation $\lagpfunc(P)$ in revealing serial dependence in PIT-values.
For parsimony, we consider only a subset of the kernels used in the previous section.  We include the binomial score kernel (BIN) as representative of the traditional test, the uniform kernel (ZU) as representative of the continuous monospectral tests, and the (ZLL) as representative of the multispectral tests. We fix $k=4$ lags in the monospectral tests which corresponds to
looking at dependencies over a time horizon of one trading
week.  To facilitate comparison to the monospectral tests,  we fix $(k_1=4, k_2=0)$ for the bispectral ZLL test. 

Missing or spurious values may be especially troublesome in a test of conditional coverage because a  PIT value missing at time $t$ introduces missing regressors at $t+1,\ldots, t+k$.  To avoid losing the subsequent $k$ observations, we replace missing or spurious $P_{t-\ell}$ with an inputed value when computing the lagged vector $\bm{h}_{t-1}$.  (As in the tests of unconditional coverage, we do not impute missing $P_t$ to backfill the dependent variables $W_t$, but simply drop these observations.)  Details of our imputation algorithm are found in Appendix \ref{app:spurious}.

\begin{table}
\centering
\begin{tabular}{rlrrrrr}
  \toprule
     &        &      & \multicolumn{2}{c}{narrow window} & \multicolumn{2}{c}{wide window} \\
     \cline{4-5} \cline{6-7} 
ID & CVT & BIN & ZU & ZLL & ZU & ZLL \\ 
  \midrule
    \multirow{4}{*}{101} & \EMone & 0.0017 & 0.0002 & 0.0020 & 0.0342 & 0.0092 \\ 
   & \EMtwo & 0.0084 & 0.0002 & 0.0009 & 0.0513 & 0.0115 \\ 
   & \MDfour & 0.0062 & 0.0003 & 0.0007 & 0.0921 & 0.0230 \\ 
   & \MDhalf & 0.0658 & 0.0063 & 0.0078 & 0.1376 & 0.0301 \\ \midrule
    \multirow{4}{*}{102} & \EMone & 0.0088 & 0.0975 & 0.1448 & 0.1873 & 0.1443 \\ 
   & \EMtwo & 0.0119 & 0.4716 & 0.3441 & 0.0201 & 0.0147 \\ 
   & \MDfour & 0.0000 & 0.0086 & 0.0023 & 0.0003 & 0.0006 \\ 
   & \MDhalf & 0.0000 & 0.0239 & 0.0111 & 0.0004 & 0.0005 \\ \midrule
    \multirow{4}{*}{103} & \EMone & 0.7540 & 0.7429 & 0.8408 & 0.1787 & 0.1879 \\ 
   & \EMtwo & 0.0773 & 0.0426 & 0.1331 & 0.0068 & 0.0149 \\ 
   & \MDfour & 0.3064 & 0.1372 & 0.2286 & 0.0074 & 0.0088 \\ 
   & \MDhalf & 0.5209 & 0.4194 & 0.5181 & 0.0318 & 0.0276 \\ \midrule
    \multirow{4}{*}{104} & \EMone & 0.8732 & 0.8501 & 0.5202 & 0.0295 & 0.0138 \\ 
   & \EMtwo & 0.8700 & 0.8456 & 0.5167 & 0.0289 & 0.0135 \\ 
   & \MDfour & 0.3225 & 0.2013 & 0.1287 & 0.0034 & 0.0022 \\ 
   & \MDhalf & 0.2085 & 0.0976 & 0.0689 & 0.0016 & 0.0011 \\ \midrule
    \multirow{4}{*}{107} & \EMone & NA & NA & NA & NA & NA  \\ 
   & \EMtwo & NA & NA & NA & NA & NA  \\ 
   & \MDfour & 0.1844 & 0.1071 & 0.1085 & 0.0001 & 0.0000 \\ 
   & \MDhalf & 0.1844 & 0.1071 & 0.1085 & 0.0001 & 0.0000 \\ \midrule
    \multirow{4}{*}{109} & \EMone & 0.9917 & 0.6351 & 0.0548 & 0.1465 & 0.1383 \\ 
   & \EMtwo & 0.8041 & 0.7510 & 0.1179 & 0.4522 & 0.5170 \\ 
   & \MDfour & 0.8929 & 0.8603 & 0.2067 & 0.5578 & 0.6225 \\ 
   & \MDhalf & 0.9313 & 0.6389 & 0.1327 & 0.4766 & 0.5266 \\ \midrule
    \multirow{4}{*}{110} & \EMone & 0.2658 & 0.3058 & 0.4121 & 0.1661 & 0.1395 \\ 
   & \EMtwo & 0.0041 & 0.0006 & 0.0009 & 0.0044 & 0.0108 \\ 
   & \MDfour & 0.0093 & 0.0008 & 0.0012 & 0.0100 & 0.0352 \\ 
   & \MDhalf & 0.1403 & 0.0513 & 0.0840 & 0.1508 & 0.2823 \\ 
\bottomrule
\end{tabular}
\caption{Tests of conditional coverage.\\ We report test $p$-values by portfolio, conditioning
variable transformation, kernel window and kernel family. The monospectral tests utilize $k=4$ lags, and for the ZLL bispectral test we set $(k_1=4, k_2=0)$.  Narrow kernel window is [0.985,0.995] and wide kernel window is [0.95,0.995]. Sample period is 2014-01-01 to 2016-12-31. Forecast models for \portfolio105, \portfolio106 and \portfolio108 (not tabulated) are rejected at the 1\% level for all choices of CVT and kernel.}
\label{tab:conditional}
\end{table}

Table \ref{tab:conditional} presents $p$-values for the tests of conditional coverage.  
For portfolios \portfolio105, \portfolio106 and \portfolio108, forecast models are strongly rejected (always at the 1\% level, and nearly always at the 0.01\% level) regardless of the choice of CVT or kernel; for brevity we drop these portfolios from the table.  
For only a single portfolio (\portfolio109), the forecast model is never rejected.  In the other six cases, choice of CVT and kernel matters. 

For portfolios \portfolio102  and  \portfolio110, the \MDfour\ CVT generally leads to rejection at the 5\% level, whereas tests using the \EMone\ CVT generally do not.  The \EMtwo\ and \MDhalf\ CVT are effective in many cases, but appear less robust than \MDfour.  This reflects the greater sensitivity of the \MDfour\ transformation to local spikes in market volatility.  In the case of portfolio \portfolio103, the tests reject on the wide window except when using the \EMone\ CVT. 

For portfolios \portfolio101 and \portfolio104, variation in $p$-value across tests is driven primarily by kernel choice, and in a manner consistent with the tests of unconditional coverage in Table \ref{tab:unconditional}.  Thus, for these two portfolios, serial dependence in the PIT-values does not appear to be the salient shortcoming in the forecast model.

In the case of  \portfolio107, the test statistic is undefined for the  \EMone\ CVT and its two-tailed counterpart (\EMtwo).  
As there were no observed violations in either tail ($P_t<.01$ or $P_t>.99$), in both cases the matrix 
$\hat{\lagpfuncmat}$ of \eqref{eq:47} is singular, so $\hat{\Sigma}_Y$ in the test statistic cannot be inverted.  This demonstrates a practical limitation of a binary-valued CVT, as short samples may often contain no tail values.  Observe also that the backtest fails to reject for the remaining two CVT on the narrow window, even though the forecast model for this portfolio is strongly rejected by the unconditional tests.  Since $P_t<\alpha_1=0.985$ for all $t$, $W_t$ has a degenerate distribution in the sample. In this situation, it may be shown that the conditional test statistic is invariant to the CVT and to $k$ and is equal to the unconditional test statistic.  Recalling that the test statistic has distribution $\chi^2_{1+k}$ under the null hypothesis, we find that the $p$-value increases with $k$. This explains why unconditional backtests may have greater power than conditional backtests in situations where an overly conservative forecast model leads to degeneracy in $W_t$.

\section{Conclusion}
The class of spectral backtests embeds many of the most widely used
tests of unconditional coverage and tests of conditional coverage,
including the binomial likelihood ratio test of \citet{bib:kupiec-95},
the interval likelihood ratio test of \citet{bib:berkowitz-01}, and
the dynamic quantile test of \citet{bib:engle-manganelli-04}. 
As we demonstrate with many examples, viewing these tests in terms
of the associated kernels facilitates the construction of
new tests.  From the perspective of the practice of risk
management, making explicit the choice of kernel measure may help
to discipline the backtesting process because the kernel 
directly expresses the user's priorities for model performance.

Different kernels are sensitive to different deviations from the null
hypothesis. A tester who only cares about systematic under- or
overestimation of quantiles within a narrow range is well served by a
number of single kernels, discrete and continuous. A tester who wants
to ensure maximum fidelity of the forecast models to the true
distributions across a wider range of quantiles may worry more about
slope violations (overestimation of quantiles at one end of a window and
underestimation at the other). Such a tester may favor
a multispectral test. However, to promote a single ``best'' test from the spectral family 
would be contrary to
the philosophy of our contribution, and we refrain from doing so. The tester should reflect on
performance priorities and select her kernel accordingly.

Our results illustrate the value to regulators of access to
bank-reported PIT-values.  Until recently, regulators effectively
observed only a sequence of VaR exceedance event indicators at a
single level $\alpha$, and therefore backtests were designed to take
such data as input.  In some jurisdictions, including the United
States, PIT-values have been collected for some time. Besides enabling the formation of spectral test 
statistics, lagged PIT-values are especially effective as
conditioning variables in regression-based tests of conditional
coverage.

\appendix
\numberwithin{equation}{section}
 \section{Proofs}\label{sec:proofs}
\subsection{Proof of Theorem~\ref{prop:product-of-Ws}}\label{sec:moments-spectr-trans}
Let $G_1$ and $G_2$ be the increasing, right-continuous functions
associated with the measures $\nu_1$ and $\nu_2$. It follows that
$W_t^*= G_1(P_t)G_2(P_t)$. The function $G^*(u)=G_1(u)G_2(u)$ must
also be increasing and right-continuous and can thus be used to define
a Lebesgue-Stieltjes measure $\nu^*$ by setting $\nu^*(\{0\})=G^*(0)=0$ and
$\nu^*((a,b]) = G^*(b)-G^*(a)$ for any $0 \leq a < b \leq 1$. It
follows that $W_t^*=G^*(P_t) = \nu^*([0,P_t])$.

The formula for $\nu^*$ is obtained by applying the integration-by-parts formula for the 
Lebesgue-Stieltjes integral \citep[Theorem A]{bib:hewitt-60}.

\subsection{Proof of Proposition~\ref{prop:tailbehavior}}

Since $G_\nu(u) = \bigO((1-u)^{-1/2+\epsilon})$ as $u \to 1$ for some
small $\epsilon$, there exists a value $u_0$ and a positive constant
$C$ such that $G_\nu(u) \leq C (1-u)^{-1/2+\epsilon}$ for $u\geq
u_0$. Let $\bar{u}$ be the larger of $u_0$ and the last point at which
$G_\nu$ is not differentiable (there are only finitely so many points
by Assumption \ref{ass:almostcontinuous}). We can decompose \eqref{eq:mustar} as
\begin{equation}
\label{eq:mustarbroken}
  \E(W_t^2) = \int_{[0, \bar{u}]}(1-u)\left(2G_\nu(u)
    -\nu(\{u\})\right)\rd \nu(u) + \int_{(\bar{u}, 1]}(1-u)\left(2G_\nu(u)
    -\nu(\{u\})\right)\rd \nu(u)
\end{equation}
The integrand in the first term is bounded above by $2G_\nu(\bar{u})$ and
so the integral is finite. We only need to prove the finiteness of the second term
which can be written as
\begin{align*}
\int_{\bar{u}}^1(1-u)2G_\nu(u) g_\nu(u) \rd u &= \int_{\bar{u}}^1(1-u)
                                                   \frac{\rd}{\rd u}
                                                   \left(G_\nu(u)^2
                                                   \right) \rd u = \left[ G_\nu(u)^2 (1-u)\right]_{\bar{u}}^1 + \int_{\bar{u}}^1
  G_\nu(u)^2\rd u 
\end{align*}
using integration by parts. 

Since $0  \leq G_\nu(u)^2 (1-u) \leq C^2
(1-u)^{2\epsilon}$ for $u \geq \bar{u}$ and $(1-u)^{2\epsilon} \to 0$ as $u \to 1$, it follows that 
$\left[ G_\nu(u)^2 (1-u)\right]_{\bar{u}}^1 = -G_\nu(\bar{u})^2(1-\bar{u})$.
Moreover, the second term is finite because
\begin{displaymath}
   \int_{\bar{u}}^1
  G_\nu(u)^2\rd u \leq  C^2 \int_{\bar{u}}^1 (1-u)^{-1+2\epsilon} \rd
  u = \frac{1}{2\epsilon}(1-\bar{u})^{2\epsilon}\,.
\end{displaymath}

\subsection{Proof of Theorem~\ref{theorem:spectr-transf-pit}}  
Let $p_t$ denote the realized value of $P_t$ and $w_{t,j} =
G_j(p_t)$ the corresponding realized value of $W_{t,j}$ for
$t=1,\ldots,n$ and $j=1,2$. 
There are two cases to consider. Either $p_t$ occurs in an interval 
where the right derrivative of $G_j$ is 0 or in an interval
where the right derivative is positive. Let $\mathcal{G}_j$ denote the
subset of $[0,1]$ consisting of all points for which the right derivative  of $G_j$ equals zero.

If $p_t \in \mathcal{G}_j$ then, by the right-continuity of $G_j$, $p_t$ must occur in an interval of the form 
$[a_{t,j},b_{t,j})$ (if
there is a jump in $G_j$ at $b_{t,j}$) or $[a_{t,j},b_{t,j}]$ (if $G_j$ is continuous at
$b_{t,j}$). In either case the contribution of $w_{t,j}$ to the likelihood
is 
\begin{align*}
  \P\left(W_{t,j} = w_{t,j}\right) &= \P\left( G_j(P_t) = G_j(p_t) \right) 
                                     = F_P(b_{t,j}) - F_P(a_{t,j})\,.
\end{align*}
If $P_t \not\in \mathcal{G}_j$ then $w_{t,j}$ satisfies $\P(W_{t,j} \leq w_{t,j}) = \P(P_t\leq p_t)$ 
and $p_t = G_j^{-1}(w_{t,j})$, the unique inverse of $G_j$ at $w_{t,j}$. The contribution to the likelihood
is a density contribution given by
\begin{align*}
    f_W(w_{t,j}\mid\bm{\theta}) &=  \frac{f_P(p_t\mid\bm{\theta})}{G_j^\prime(p_t)}\,.
\end{align*}
The general form of the realized likelihood given $\bm{w}_j =
(w_{1,j},\ldots,w_{n,j})^\prime$ is thus
\begin{align*}
  \likhood_{W_j}(\bm{\theta} \mid \bm{w}_j) &= \prod_{p_t \in \mathcal{G}_j}(F_P(b_{t,j}) - F_P(a_{t,j}))
  \prod _{p_t \not\in \mathcal{G}_j} 
\frac{f_P(p_t)\mid\bm{\theta})}{G_j^\prime(p_t)}
\end{align*}
For the measures $\nu_1$ and $\nu_2$ the sets $\mathcal{G}_1$ and
$\mathcal{G}_2$ may differ at most by a null set.
Let us assume that each realized point $p_t$ is either in both of the sets
$\mathcal{G}_1$ and $\mathcal{G}_2$ or in neither of the sets.

If $p_t \in \mathcal{G}_1$ and $p_t \in \mathcal{G}_2$ then the agreement of the supports on
$(0,1)$ implies that $a_{t,1}=a_{t,2}$ and $b_{t,1}=b_{t,2}$. Thus
the likelihood contributions are identical.

If $p_t \not\in \mathcal{G}_1$ and $p_t \not\in \mathcal{G}_2$ then 
the likelihoods differ only by the scaling factor $G_j^\prime(p_t)$
which does not involve the parameters $\bm{\theta}$. This factor will appear in the
log-likelihood only as an unimportant additive term and cancel out of
the LR test statistic.

It follows that the likelihoods $\likhood_{W_j}(\bm{\theta} \mid \bm{w}_j)$
are maximized by the same values $\hat{\bm{\theta}}$ and the LR-test
statistics are identical.

\subsection{Proof of Theorem~\ref{theorem:pearson-test}}

Let $\bm{X}_t = (X_{t,0},\ldots,X_{t,m})^\prime$ be the $(m+1)$-dimensional random vector with $X_{t,i} =\indicator{ \bm{1}^\prime \bm{W}_t =i}$ for
$i=0,\ldots,m$. Under~\eqref{eq:nullhypothesis} $\bm{X}_t$ has a
multinomial distribution satisfying $\E(X_{t,i})= \theta_i$,
$\var(X_{t,i}) = \theta_i(1-\theta_i)$ and $\cov(X_{t,i},X_{t,j})
=-\theta_i\theta_j$ for $i\ne j$.

Now define $\bm{Y}_t$ to be the $m$-dimensional random
vector obtained from $\bm{X}_t$ by omitting the first component. Then
$\E(\bm{Y}_t)= \bm{\theta}= (\theta_1,\ldots,\theta_m)^\prime$ and
$\Sigma_Y$ is the $m \times m$ submatrix of $\cov(\bm{X}_t)$ resulting
from deletion of the first row and column. Let $\average{\bm{Y}} = n^{-1} \sum_{t=1}^n \bm{Y}_t$.
A standard approach to the asymptotics of the Pearson test is to show
that
\begin{equation}\label{eq:8}
  S_m = \sum_{i=0}^m \frac{(O_i - n
    \theta_i)^2}{n\theta_i}= \sum_{i=0}^m \frac{(\sum_{t=1}^n X_{t,i} - n
    \theta_i)^2}{n\theta_i} = n (\average{\bm{Y}}- \bm{\theta})^\prime
  \Sigma_Y^{-1} (\average{\bm{Y}}- \bm{\theta}),
\end{equation}
and hence to
argue that $S_m \sim \chi^2_m$ in the limit as $n\to\infty$ by the central
limit theorem. It remains to show that the right-hand side of~\eqref{eq:8} has the
spectral test representation~(\ref{eq:28}).

Let $A$ be the $m \times m$ matrix with rows given by
$(\bm{e}_1-\bm{e}_2,\bm{e}_2-\bm{e}_3,\ldots,\bm{e}_m)$
where $\bm{e}_i$ denotes the $i$th unit vector. It may be easily
verified that $\bm{Y}_t = A\bm{W}_t$, $\bm{\theta}  = A\bm{\mu}_W$ and
$\Sigma_Y = A\Sigma_W A^\prime$.  It follows that
\begin{displaymath}
  n (\average{\bm{W}}- \bm{\mu}_W)^\prime
 \Sigma_W^{-1} (\average{\bm{W}}- \bm{\mu}_W) =   n (\average{\bm{Y}}- \bm{\theta})^\prime
  \Sigma_Y^{-1} (\average{\bm{Y}}- \bm{\theta}) = S_m.
\end{displaymath}

\section{Conditional bispectral  Z-test}
\label{app:cond-bisp-test}

The conditional spectral Z-test generalizes to a conditional
multispectral Z-test.  In the bispectral case, we construct 
two sets of transformed reported PIT-values
$(W_{t,1},W_{t,2})$ for $t=1,\ldots,n$, and form the vector
$\bm{Y}_t$ of length $k_1+k_2+2$ given by
\begin{equation}\label{eq:48}
  \bm{Y}_t =
  \left(\bm{h}^\prime_{t-1,1}\zeromean{W}_{t,1},\bm{h}^\prime_{t-1,2}\zeromean{W}_{t,2}\right)^\prime,
\end{equation}
where $\zeromean{W}_{t,i} = W_{t,i}-\mu_{W,i}$ and $\bm{h}_{t-1,i} =
(1,\lagpfunc_i(P_{t-1}),\ldots,\lagpfunc_i(P_{t-k_i}))^\prime$.   Parallel to the univariate case,
let $\average{\bm{Y}}_{n,k} = (n-k)^{-1}\sum_{t=k+1}^n \bm{Y}_t$ for $k=k_1\vee k_2$, 
and let $\hat{\Sigma}_Y$ denote a consistent estimator of
$\Sigma_{Y}:= \cov(\bm{Y}_t)$.  By the theory
of~\citet{bib:giacomini-white-06}, for $n$ large and $(k_1,k_2)$ fixed,
\begin{equation}\label{eq:32b}
 (n-k)\; \average{\bm{Y}}_{n,k}^\prime \; \hat{\Sigma}_{Y}^{-1} \; 
\average{\bm{Y}}_{n,k} \sim \chi^2_{k_1+k_2+2}.
\end{equation}

Working under the null hypothesis, we can generalize \eqref{eq:SigmaYnull} to 
$\Sigma_{Y} = A_W\circ \lagpfuncmat$, 
where $\circ$ denotes element-by-element multiplication (Hadamard product). The matrices are
\begin{equation}
\lagpfuncmat = \begin{pmatrix}
               \E\left( \bm{h}_{t-1,1}\bm{h}_{t-1,1}^\prime \right)
                                 & \E\left( \bm{h}_{t-1,1} \bm{h}_{t-1,2}^\prime \right) \\
               \E\left( \bm{h}_{t-1,2}\bm{h}_{t-1,1}^\prime \right) 
                                 &  \E\left( \bm{h}_{t-1,2} \bm{h}_{t-1,2}^\prime \right)
                       \end{pmatrix} ,\;\;
A_W = \begin{pmatrix}
            \sigma_{W,1}^2 \onesmatrix_{k_1+1,k_1+1} & \sigma_{W,12} \onesmatrix_{k_1+1,k_2+1} \\
            \sigma_{W,12} \onesmatrix_{k_2+1,k_1+1} & \sigma_{W,2}^2 \onesmatrix_{k_2+1,k_2+1}
           \end{pmatrix}  \label{eq:AsubW}
\end{equation}
where $\onesmatrix_{m,n}$ denotes the $m\times n$ matrix of ones and 
$\sigma_{W,12}=\E\left(\zeromean{W}_{t,1}\zeromean{W}_{t,2}\right)$.
Our tests use the estimator
$\hat{\Sigma}_{Y} = A_W\circ \hat{\lagpfuncmat}$, where $\hat{\lagpfuncmat}$ generalizes 
\eqref{eq:47} as 
\begin{equation}
\label{eq:47bispectral}
\hat{\lagpfuncmat} = (n-(k_1\vee k_2))^{-1} 
    \sum_{t=(k_1\vee k_2)+1}^n 
               (\bm{h}'_{t-1,1},\bm{h}'_{t-1,2})^\prime(\bm{h}'_{t-1,1},\bm{h}'_{t-1,2}).
\end{equation}

\section{Identification of spurious PIT values}
\label{app:spurious}

Consider a stylized Gaussian model in which loss is given by $L_t=\sigma_{t-1}Z_t$, 
where $(Z_t)$ is an iid sequence of standard normal random variables
and volatility $\sigma_{t-1}$ is $\bankfiltration_{t-1}$-measurable.  
Time variation in $\sigma_t$ may arise from stochastic volatility or from changes over time in portfolio composition. 
Suppose that the risk-manager knows the true underlying distribution and the volatility. 
The risk-manager's ideal value-at-risk forecast  at $\alpha=0.99$ is then $\widehat{\VaR}_{t} = \Phi^{-1}(0.99)\sigma_{t-1}$, 
where $\Phi$ is the standard normal cdf.   
We do not observe $\sigma_{t-1}$, but from observing
 $L_t$ and  $\widehat{\VaR}_{t}$, we can back out the realized value of $Z_t$ as
\begin{equation}
\label{eq:Zdefn}
 Z_t = \Phi^{-1}(0.99)\times L_t/\widehat{\VaR}_t.
\end{equation} 
Furthermore, the PIT values can be expressed as 
\begin{equation}
\label{eq:PfromZ}
P_t = \modelcdf_{t-1}(L_t) = \Phi( L_t / \sigma_{t-1}) = \Phi(Z_t).
\end{equation}

In general, we would not expect the $Z_t$ to be Gaussian, 
so \eqref{eq:PfromZ} will not hold.  However, so long as $(Z_t)$ is iid,
there will still be a monotonic relationship between $Z_t$ (as defined by \eqref{eq:Zdefn})
and $P_t$.   We find that the predicted relationship holds qualitatively for all bank-reported
portfolios, but with more noise in some portfolios than in others.  This suggests that we can use violations of monotonicity to identify spurious PIT values, but the threshold for identification must vary across portfolios.

Let $H(z; \theta_i): \mathbb{R}\rightarrow [0,1]$ be a family of fitting functions with parameter $\theta_i$ for portfolio $i$, and replace \eqref{eq:PfromZ} by
\begin{equation}
\label{eq:PfromHZ}
P_{i,t} = H(Z_{i,t}; \theta_i) + \epsilon_{i,t}
\end{equation}
where the $\epsilon_{i,t}$ are white-noise residuals.  Since the $H$ function should be increasing, it is convenient to take $H$ to be a cdf, even though it does not have a statistical interpretation in our context.  For convenience, we take $H$ to be the normal cdf with unrestricted $(\mu_i,\sigma_i)$ as $\theta_i$. 

For each portfolio $i$, we proceed as follows:
\begin{enumerate}
\item  Fit $\theta_i$ by nonlinear least squares, and construct residuals
$\epsilon_{it} = P_{it} - H(Z_{it}; \hat{\theta}_i)$.  
\item The $(\epsilon_{it})$ are bounded in the open interval $(-1,1)$, because $H(Z_{it})$ does not produce boundary values.  We model $\epsilon_{it}$ as drawn from a rescaled beta distribution on $(-1,1)$ with parameters 
$(\betaa=\tau_i/2, \betab=\tau_i/2)$.  This distribution has mean zero and variance $1/(\tau_i+1)$, so
we simply fit $\tau_i$ to the variance of the regression residuals.  
\item Let $B(\epsilon; \hat{\tau}_i)$ be the fitted beta distribution.  We flag an observation
$P_{it}$ as spurious whenever $B(\epsilon_{it}; \hat{\tau}_i) < q/2$ or $B(\epsilon_{it}; \hat{\tau}_i) > 1-q/2$, where $q$ is a tolerance parameter.
\item We reestimate $\tau_i$ as in step 3 on a sample that excludes the spurious observations.  Repeat step 4 with the updated $\hat{\tau}_i$.  An observation is flagged as spurious if it is rejected in \textit{either} round of estimation.
\end{enumerate}

In our baseline procedure, we set the tolerance parameter to  $q=10^{-5}$, which 
is intended to flag only the most egregious inconsistencies between $P_{it}$ and the pair 
$(L_{it}, \widehat{\VaR}_{it})$.  A typical case involves a PIT value very close to zero or one associated with a modest \PnL\ such that $\vert L_{it}\vert < \widehat{\VaR}_{it}$.  Setting $q=0$ is equivalent to shutting down the identification of spurious values.

The procedure yields \textit{imputed} PIT values as  $\imputed{P}_{it} = H(Z_{it}; \hat{\theta}_i)$.  As noted in Section 
\ref{sec:emp:conditional}, we use the imputed values to fill in for spurious values in forming regressors in the tests of conditional coverage.  

\end{doublespacing}

\bibliographystyle{jf}  
\bibliography{spectraltest}

\end{document}


\begin{titlepage} \singlespacing

\maketitle
\begin{abstract}
We extract material from a companion paper (in progress) which elaborates on certain results in our main paper.  To avoid confusion with references to tables, figures and equations in the main paper, we prepend ``S'' when numbering tables, figures and equations contained in this supplement.
\end{abstract}
\bigskip
\thispagestyle{empty}
\end{titlepage}
\pagestyle{plain} \setcounter{page}{1}

\renewcommand{\theequation}{S.\arabic{equation}}
\renewcommand{\thetable}{S.\arabic{table}}
\renewcommand{\thefigure}{S.\arabic{figure}}
\renewcommand{\thetheorem}{S.\arabic{theorem}}
\renewcommand{\theproposition}{S.\arabic{proposition}}

\begin{onehalfspacing}
\renewcommand{\thesection}{Supplement \Alph{section}:}
\section{Truncated probitnormal score test}\label{app:pns}
 The probitnormal model yields a further new spectral
 test based on a classical score test of $\bm{\theta}=\bm{\theta}_0$ against the alternative
 $\bm{\theta}\neq\bm{\theta}_0$.
 Let $\likhood_P(\bm{\theta}\mid P_t^*)$ denote the likelihood
 contribution of a truncated observation $P_t^*=\alpha_1 \vee (P_t
 \wedge \alpha_2)$ when $P_t$ follows~(13) and write
 \begin{equation}\label{eq:38}
   \bm{S}_t(\bm{\theta}) = \left(\frac{\partial}{\partial \mu} 
     \ln \likhood_P(\bm{\theta}\mid P_t^*), \frac{\partial}{\partial \sigma} 
      \ln \likhood_P(\bm{\theta}\mid P_t^*)\right)^\prime
 \end{equation}
 for the corresponding score vector. Let $\average{\bm{S}}_n(\bm{\theta}_0) =\tfrac{1}{n}\sum_{t=1}^n
 \bm{S}_t(\bm{\theta}_0)$ be the mean of the observed score vectors
 under the null. 

 Standard likelihood theory implies that $\sqrt{n}\average{\bm{S}}_n(\bm{\theta}_0)
     \xrightarrow[n\to \infty]{d} N_2\big(\bm{0},I(\bm{\theta}_0)\big)$
     under the null,
     where $I(\bm{\theta})$ denotes the covariance matrix of
     $\bm{S}_t(\bm{\theta})$, i.e.,~the Fisher information matrix.
 For large $n$ we have approximately that
 \begin{equation}
 \label{eq:scoretest}
   n \average{\bm{S}}_n(\bm{\theta}_0)^\prime I(\bm{\theta}_0)^{-1}
   \average{\bm{S}}_n(\bm{\theta}_0) \sim \chi^2_2 \,.
 \end{equation}
An analytical expression for $I(\bm{\theta}_0)$ is provided later in this appendix.
 
 Under a restriction on the kernel window we can show that the score
 test~\eqref{eq:scoretest} is a bispectral Z-test with kernel
 measures $\nu_1$ and $\nu_2$ given by sums of discrete and continuous parts.
 \begin{theorem}\label{prop:prob-score-test}
 Let $z_0$ be the unique solution to the equation
 \begin{equation}
   \label{eq:6z}
   z^2 + \left(\phi(z)/\Phi(z)\right)z - 1 = 0.
 \end{equation}
 Provided $\Phi(z_0)\leq\alpha_1<\alpha_2\leq 1$, then 
 $\bm{S}_t(\bm{\theta}_0) = \bm{W}_t -\bm{\mu}_W$, almost surely, where
 \begin{equation}\label{eq:9}
   W_{t,i} = \dnudiscrete_{i,1}\indicator{P_t \geq \alpha_1} + \dnudiscrete_{i,2}\indicator{P_t \geq
   \alpha_2} + \int_{\alpha_1}^{\alpha_2}g_i(u)\indicator{P_t\geq u}\rd u
 \end{equation}
 for $\dnudiscrete_{i,1}\geq 0$, $\dnudiscrete_{i,2}\geq 0$ and
 $g_i(u)$ positive and differentiable on $[\alpha_1, \alpha_2]$.
 \end{theorem}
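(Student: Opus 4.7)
The plan is to compute the score vector $\bm{S}_t(\bm{\theta}_0)$ directly on each of the three regions determined by the truncation ($P_t\le\alpha_1$, $\alpha_1<P_t<\alpha_2$, and $P_t\ge\alpha_2$), and then re-express the resulting piecewise function of $P_t$ in the integrated-indicator form of~\eqref{eq:9}. Throughout I write $z_i=\Phi^{-1}(\alpha_i)$.

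I would parametrise the probitnormal model as $P_t=\Phi(\mu+\sigma Z)$ with $Z\sim N(0,1)$, so that under $\bm{\theta}_0=(0,1)$ the variable $P_t$ is uniform on $[0,1]$ and has density $f_P(p;\mu,\sigma)=\phi((\Phi^{-1}(p)-\mu)/\sigma)/(\sigma\phi(\Phi^{-1}(p)))$. The truncated log-likelihood equals $\ln\Phi((z_1-\mu)/\sigma)$, $\ln f_P(P_t;\mu,\sigma)$, or $\ln(1-\Phi((z_2-\mu)/\sigma))$ on the three regions, respectively. Differentiating and evaluating at $\bm{\theta}_0$ yields $(S_{t,1},S_{t,2})=(\Phi^{-1}(P_t),\,\Phi^{-1}(P_t)^2-1)$ in the middle region, $(-\phi(z_1)/\alpha_1,\,-z_1\phi(z_1)/\alpha_1)$ in the left tail, and $(\phi(z_2)/(1-\alpha_2),\,z_2\phi(z_2)/(1-\alpha_2))$ in the right tail.

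The second step is the elementary identity that any piecewise-$C^1$ function $s(p)$ which is constant on $[0,\alpha_1]$ and on $[\alpha_2,1]$ satisfies
\begin{equation*}
 s(P_t)-s(0) = \bigl[s(\alpha_1^+)-s(\alpha_1^-)\bigr]\indicator{P_t\ge\alpha_1} + \bigl[s(\alpha_2^+)-s(\alpha_2^-)\bigr]\indicator{P_t\ge\alpha_2} + \int_{\alpha_1}^{\alpha_2}s'(u)\indicator{P_t\ge u}\,\rd u,
\end{equation*}
which is verified by direct comparison on each region. Applying this componentwise to $\bm{S}_t(\bm{\theta}_0)$ identifies $\dnudiscrete_{i,j}$ with the jump of the $i$-th score at $\alpha_j$ and $g_i$ with the derivative of the middle-region expression, giving $g_1(u)=1/\phi(\Phi^{-1}(u))$ and $g_2(u)=2\Phi^{-1}(u)/\phi(\Phi^{-1}(u))$; the leftover constant is absorbed into $\bm{\mu}_W=\E[\bm{W}_t]$, since the score has mean zero under the null.

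The main obstacle is verifying the sign constraints, and this is where the hypothesis $\Phi(z_0)\le\alpha_1$ enters. Positivity of $g_1$ is immediate, while $g_2(u)\ge 0$ requires $\Phi^{-1}(u)\ge 0$ on $[\alpha_1,\alpha_2]$; observing that the LHS of~\eqref{eq:6z} equals $-1$ at $z=0$ shows $z_0>0$, so the hypothesis forces $\alpha_1\ge\Phi(z_0)>\tfrac12$. The delicate discrete weight is $\dnudiscrete_{2,1}=z_1^2-1+z_1\phi(z_1)/\Phi(z_1)$, which is exactly the LHS of~\eqref{eq:6z}; since this expression tends to $+\infty$ as $z\to\infty$ and $z_0$ is its unique root, continuity gives $\dnudiscrete_{2,1}\ge 0$ whenever $z_1\ge z_0$. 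The remaining weights $\dnudiscrete_{1,1}=z_1+\phi(z_1)/\Phi(z_1)$, $\dnudiscrete_{1,2}=\phi(z_2)/(1-\Phi(z_2))-z_2$, and $\dnudiscrete_{2,2}=1-z_2^2+z_2\phi(z_2)/(1-\Phi(z_2))$ reduce to the standard inequalities $z\Phi(z)+\phi(z)\ge 0$ and $\phi(z)/(1-\Phi(z))\ge z$ for $z>0$ (the latter applied twice), both of which follow from elementary monotonicity arguments.
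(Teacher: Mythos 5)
Your proposal is correct and follows essentially the same route as the paper: compute the piecewise score at $\bm{\theta}_0$, identify the discrete weights $\dnudiscrete_{i,j}$ with the jumps at $\alpha_1,\alpha_2$ and $g_i$ with the derivative of the interior score, and obtain condition~\eqref{eq:6z} from non-negativity of the second-component jump at $\alpha_1$. Your write-up is in fact somewhat more explicit than the paper's on two points it leaves implicit — the positivity of $g_2$ (via $z_0>0$, hence $\alpha_1>\tfrac12$) and the unconditional non-negativity of the $\alpha_2$-weights via the Mills-ratio inequality.
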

\begin{proof}
 \begin{equation}\label{eq:7}
   L_P(\bm{\theta} \mid \bm{P}^*) = \prod_{t\,:\,P_t^* = \alpha_1}
   F_P(\alpha_1\mid\bm{\theta})
   \prod _{t\,:\,\alpha_1 < P_t^* < \alpha_2} f_P(P_t^*\mid\bm{\theta})
   \prod_{t\,:\,P_t^* = \alpha_2} \tail{F}_P(\alpha_2\mid\bm{\theta})
 \end{equation}
 where $\tail{F}(u)$ denotes the tail probability $1-F(u)$.
 The likelihood contributions 
 $\likhood_P(\bm{\theta}\mid P_t^*)$ are given by the individual terms
 in~\eqref{eq:7} according to whether $P_t^*=\alpha_1$, $\alpha_1 <
 P_t^* < \alpha_2$ or $P_t^*=\alpha_2$. 
 Computing the score statistic and evaluating it at $\bm{\theta}_0 =
 (0,1)^\prime$ yields 
   \begin{displaymath}
     \bm {S}_t(\bm{\theta}_0) =
     \begin{cases}
      \bm{\scorecase}_1(\alpha_1) & P_t^* = \alpha_1,\\
      \bm{\scorecase}_*(P_t^*) & \alpha_1    < P_t^*  < \alpha_2, \\
      \bm{\scorecase}_2(\alpha_2)  & P_t^* = \alpha_2.
     \end{cases}
 \quad\text{where}\quad
  \bm{\scorecase}_1(u) = \left(\begin{array}{c} -\phi(\Phi^{-1}(u))/u \\
  -\phi(\Phi^{-1}(u))\Phi^{-1}(u)/u \end{array}\right),
   \end{displaymath}
   \begin{align*}
  \bm{\scorecase}_*(u) &= \left( \begin{array}{c} \Phi^{-1}(u) \\
       \Phi^{-1}(u)^2-1 \end{array} \right) \quad\text{and}\quad
  \bm{\scorecase}_2(u) = \left(\begin{array}{c} \phi(\Phi^{-1}(u))/(1-u)  \\
            \phi(\Phi^{-1}(u)) \Phi^{-1}(u)/(1-u) \end{array}\right) \,.
   \end{align*}
 The discontinuities at $\alpha_1$ and $\alpha_2$ are given by
  \begin{equation*}
  (\dnudiscrete_{1,1}, \dnudiscrete_{2,1})^\prime = \bm{\scorecase}_*(\alpha_1) - 
 \bm{\scorecase}_1(\alpha_1),
  \qquad
  (\dnudiscrete_{1,2}, \dnudiscrete_{2,2})^\prime = \bm{\scorecase}_2(\alpha_2) - 
 \bm{\scorecase}_*(\alpha_2)
   \end{equation*}
 and non-negativity of the $\dnudiscrete_{i,j}$ in all cases is guaranteed
 provided $\bm{\scorecase}_*(\alpha_1) - 
 \bm{\scorecase}_1(\alpha_1)\geq \bm{0}$. The second component of this
 vector inequality leads to condition~\eqref{eq:6z}.
 The weighting functions can be obtained by
   differentiating $\bm{\scorecase}_*(u)$ with respect to $u$ on
 $[\alpha_1,\alpha_2]$ and are thus
 \begin{displaymath}
   g_1(u) = \frac{1}{\phi(\Phi^{-1}(u))}\indicator{\alpha_1 \leq u \leq
     \alpha_2},\qquad g_2(u) = 
 \frac{2\Phi^{-1}(u)}{\phi(\Phi^{-1}(u))}\indicator{\alpha_1 \leq u \leq
     \alpha_2}.
 \end{displaymath}
 Finally, since $\bm{\mu}_{\bm{W}}  = \bm{W}_t -
 \bm{S}_t(\bm{\theta}_0)$, we must have that
  $\bm{\mu}_{\bm{W}}  = -\bm{\scorecase}_1(\alpha_1)$.
\end{proof}

 We find $\Phi(z_0)\approx 0.8$, so the constraint on $\alpha_1$ is unlikely to bind in application to the range of
 tail probability levels of practical interest.  For $\alpha_2 < 1$ the $W_{t,i}$ variables are bounded, guaranteeing
 that the elements of $I(\bm{\theta}_0)$ are finite. For $\alpha_2=1$
 the $G_\nu$ functions for $\nu_1$ and $\nu_2$ grow like $\Phi^{-1}(u)$
 and $\Phi^{-1}(u)^2$ respectively. We can use
 the asymptotic approximation $\Phi^{-1}(u) \sim \sqrt{-2\ln(1-u)}$ as
 $u \to 1$ to verify
 that the condition of Proposition 3.2
 is satisfied in both cases.

\subsection*{Fisher information matrix}
The following identities are useful for dealing with the probitnormal 
distribution:
\begin{align}
  \int_{\alpha_1}^{\alpha_2} \Phi^{-1}(u) \rd u &=
  \phi(\Phi^{-1}(\alpha_1)) -   \phi(\Phi^{-1}(\alpha_2)) \label{eq:40}\\
\int_{\alpha_1}^{\alpha_2} \left(\Phi^{-1}(u)^2 - 1\right) \rd u&=  
\Phi^{-1}(\alpha_1)\phi(\Phi^{-1}(\alpha_1)) -   
\Phi^{-1}(\alpha_2)\phi(\Phi^{-1}(\alpha_2)).\label{eq:41}
\end{align}
Let $\xi(p \mid \bm{\theta}) = (\Phi^{-1}(p)-\mu)/\sigma$. The first
derivatives of the log-likelihood of the truncated probitnormal
distribution are
\begin{equation}\label{eq:190}
\frac{\partial}{\partial
      \mu }\ln \likhood(\bm{\theta}\mid P_t^*)=
    \begin{cases}
-\frac{
\phi\big(\xi(\alpha_1 \mid \bm{\theta})\big)}{ \sigma\Phi\big(\xi(\alpha_1 \mid 
\bm{\theta})\big)}
 & P_t^* = \alpha_1,\\
-\frac{\xi\big(P_t^*\mid\bm{\theta}\big)}{\sigma}&
 \alpha_1    < P_t^*  < \alpha_2, \\
 \frac{
\phi\big(\xi(\alpha_2\mid \bm{\theta})\big)}{ 
\sigma\overline{\Phi}\big(\xi(\alpha_2 \mid \bm{\theta})\big)}
& P_t^* = \alpha_2,
    \end{cases}
  \end{equation}
and
\begin{equation}\label{eq:191}
\frac{\partial}{\partial
      \sigma }\ln \likhood(\bm{\theta}\mid P_t^*)=
    \begin{cases}
-\frac{
\phi\big(\xi(\alpha_1 \mid \bm{\theta})\big) \xi(\alpha_1 \mid \bm{\theta})}{ 
\sigma\Phi\big(\xi(\alpha_1 \mid \bm{\theta})\big)}
 & P_t^* = \alpha_1,\\
-\frac{\xi\big(P_t^*\mid\bm{\theta}\big)^2+1}{\sigma}&
 \alpha_1    < P_t^*  < \alpha_2, \\
 \frac{
\phi\big(\xi(\alpha_2\mid \bm{\theta})\big) \xi(\alpha_2 \mid \bm{\theta})}{ 
\sigma\overline{\Phi}\big(\xi(\alpha_2 \mid \bm{\theta})\big)}
& P_t^* = \alpha_2.
    \end{cases}
  \end{equation}

Recall that the expected Fisher information matrix is defined as  
\begin{displaymath}
  I(\bm{\theta})_{ij} = -\E \left( \frac{\partial^2}{\partial
      \theta_i \partial \theta_j}\ln \likhood(\bm{\theta}\mid P_t^*)\right).
\end{displaymath}
The conditional second derivatives of the log-likelihood are
 \begin{equation}\label{eq:390}
-\frac{\partial^2}{\partial
      \mu^2 }\ln \likhood(\bm{\theta}\mid P_t^*)=
    \begin{cases}
\frac{
\phi(\xi(\alpha_1 \mid \bm{\theta}))\Big( \phi(\xi(\alpha_1 \mid \bm{\theta})) 
+\xi(\alpha_1 \mid \bm{\theta})\Phi(\xi(\alpha_1
  \mid \bm{\theta})) \Big)}{ \sigma^2\Phi(\xi(\alpha_1 \mid \bm{\theta}))^2}
 & P_t^* = \alpha_1,\\
\frac{1}{\sigma^2}&
 \alpha_1    < P_t^*  < \alpha_2, \\
 \frac{
\phi(\xi(\alpha_2\mid \bm{\theta}))\Big(\phi(\xi(\alpha_2 \mid \bm{\theta})) - 
\xi(\alpha_2 \mid \bm{\theta})\overline{\Phi}(\xi(\alpha_2
  \mid \bm{\theta}))   \Big)}{ \sigma^2\overline{\Phi}(\xi(\alpha_2 \mid 
\bm{\theta}))^2}
& P_t^* = \alpha_2,
    \end{cases}
  \end{equation}
\begin{equation}\label{eq:391}
-\frac{\partial^2}{\partial
      \sigma^2 }\ln \likhood(\bm{\theta}\mid P_t^*)=
    \begin{cases}
\frac{
\phi(\xi(\alpha_1 \mid \bm{\theta}))\Big(  \xi(\alpha_1 \mid 
\bm{\theta})^2\phi(\xi(\alpha_1 \mid \bm{\theta})) +\xi(\alpha_1 \mid 
\bm{\theta})^3\Phi(\xi(\alpha_1 \mid \bm{\theta})) -2\xi(\alpha_1 \mid 
\bm{\theta})\Phi(\xi(\alpha_1
  \mid \bm{\theta})) \Big)}{ \sigma^2\Phi(\xi(\alpha_1 \mid \bm{\theta}))^2}
 & P_t^* = \alpha_1,\\
\frac{3\xi(P_t^*\mid \bm{\theta})^2-1}{\sigma^2}&
 \alpha_1    < P_t^*  < \alpha_2, \\
 \frac{
\phi(\xi(\alpha_2\mid \bm{\theta}))
\Big( \xi(\alpha_2 \mid
  \bm{\theta})^2\phi(\xi(\alpha_2 \mid \bm{\theta})) - \xi(\alpha_2
  \mid \bm{\theta})^3\overline{\Phi}(\xi(\alpha_2 \mid \bm{\theta})) +
  2\xi(\alpha_2 \mid \bm{\theta})\overline{\Phi}(\xi(\alpha_2
  \mid \bm{\theta}))\Big)
}{ \sigma^2\overline{\Phi}(\xi(\alpha_2 \mid \bm{\theta}))^2}
& P_t^* = \alpha_2,
    \end{cases}
  \end{equation}
\begin{equation}\label{eq:392}
-\frac{\partial^2}{\partial
      \mu \partial \sigma }\ln \likhood(\bm{\theta}\mid P_t^*)=
    \begin{cases}
\frac{
\phi(\xi(\alpha_1 \mid \bm{\theta}))\Big( \phi(\xi(\alpha_1 \mid \bm{\theta}))
  \xi(\alpha_1 \mid \bm{\theta}) -\Phi(\xi(\alpha_1 \mid \bm{\theta}))
  + \xi(\alpha_1 \mid \bm{\theta})^2\Phi(\xi(\alpha_1
  \mid \bm{\theta})) \Big)}
{ \sigma^2\Phi(\xi(\alpha_1 \mid \bm{\theta}))^2}
 & P_t^* = \alpha_1,\\
\frac{2\xi(P_t^*\mid \bm{\theta})}{\sigma^2}&
 \alpha_1    < P_t^*  < \alpha_2, \\
 \frac{
\phi(\xi(\alpha_2\mid \bm{\theta}))\Big( \phi(\xi(\alpha_2 \mid \bm{\theta}))
  \xi(\alpha_2 \mid \bm{\theta}) + \overline{\Phi}(\xi(\alpha_2 \mid
  \bm{\theta})) - \xi(\alpha_2 \mid \bm{\theta})^2\overline{\Phi}(\xi(\alpha_2
  \mid \bm{\theta}))\Big)}{ \sigma^2\overline{\Phi}(\xi(\alpha_2 \mid 
\bm{\theta}))^2}
& P_t^* = \alpha_2.
    \end{cases}
  \end{equation}
By taking expectations using~\eqref{eq:40} and~\eqref{eq:41} and evaluating at 
$\bm{\theta}_0 = (0,1)^\prime$ 
we obtain the elements of $I(\bm{\theta}_0)$:
  \begin{multline}\label{eq:43}
    I(\bm{\theta}_0)_{1,1} = \phi(\Phi^{-1}(\alpha_1))^2/\alpha_1 +   
\phi(\Phi^{-1}(\alpha_2))^2/(1-\alpha_2)\\
+  \phi(\Phi^{-1}(\alpha_1))\Phi^{-1}(\alpha_1) 
-\phi(\Phi^{-1}(\alpha_2))\Phi^{-1}(\alpha_2) +
    (\alpha_2-\alpha_1),
  \end{multline}
 \begin{multline}\label{eq:44}
    I(\bm{\theta}_0)_{2,2} =
    \phi(\Phi^{-1}(\alpha_1))^2\Phi^{-1}(\alpha_1)^2/\alpha_1 +
    \phi(\Phi^{-1}(\alpha_1))\Phi^{-1}(\alpha_1)^3 \\
+ \phi(\Phi^{-1}(\alpha_1))\Phi^{-1}(\alpha_1) 
 + \phi(\Phi^{-1}(\alpha_2))^2\Phi^{-1}(\alpha_2)^2/(1-\alpha_2) \\ -
    \phi(\Phi^{-1}(\alpha_2))\Phi^{-1}(\alpha_2)^3
- \phi(\Phi^{-1}(\alpha_2))\Phi^{-1}(\alpha_2) 
+2(\alpha_2-\alpha_1),
\end{multline}
\begin{multline}\label{eq:45}
    I(\bm{\theta}_0)_{1,2} = \phi(\Phi^{-1}(\alpha_1))^2
    \Phi^{-1}(\alpha_1)/\alpha_1 + \phi(\Phi^{-1}(\alpha_1))\big(1+
    \Phi^{-1}(\alpha_1)^2\big) \\
+ \phi(\Phi^{-1}(\alpha_2))^2
    \Phi^{-1}(\alpha_2)/(1-\alpha_2) - \phi(\Phi^{-1}(\alpha_2))\big(1+
    \Phi^{-1}(\alpha_2)^2\big).
\end{multline}





\section{Moments for the beta kernel}
\label{app:beta}
We provide a general solution to the moments and cross-moments of the transformed PIT values
when the kernel densities take the form
\[
g_\nu(u) = \frac{(u-\alpha_1)^{\betaa-1} (\alpha_2-u)^{\betab-1}}{(\alpha_2-\alpha_1)^{\betaa+\betab-1} B(\betaa, \betab)} 
\]
for parameters $(\betaa>0, \betab>0)$ and $\alpha_1\leq u\leq\alpha_2$. 
The normalization guarantees that $G_\nu(\alpha_2)=1$, and helps align the solution with standard beta distribution functions provided by statistical packages.  In \texttt{R} notation, the kernel function is simply
\[
  G_\nu(u) = \operatorname{pbeta}\left(\frac{\max\{\alpha_1,\min\{u,\alpha_2\}\}-\alpha_1}
                                                                               {\alpha_2-\alpha_1}, \betaa, \betab\right).
\]

Solving for moments and cross-moments of kernels $(g_1(P), g_2(P))$ for uniform $P$ involves the following integral:
\begin{multline}
\label{eq:defineM} 
M(\betaa_1,\betab_1,\betaa_2,\betab_2) = 
\int_{\alpha_1}^{\alpha_2} (1-u) g_1(u) G_2(u) du \\
     =   \frac{B(\betaa_1+\betaa_2,1+\betab_1)}{\betaa_2 B(\betaa_1,\betab_1) B(\betaa_2,\betab_2)}
             {_3}F_2(\betaa_2,\betaa_1+\betaa_2,1-\betab_2; 1+\betaa_2, 1+\betaa_1+\betaa_2+\betab_1; 1)\\
      = \frac{B(\betaa_1+\betaa_2,1+\betab_1+\betab_2)}{\betaa_2 B(\betaa_1,\betab_1) B(\betaa_2,\betab_2)}
             {_3}F_2(1,\betaa_1+\betaa_2,\betaa_2+\betab_2; 1+\betaa_2, 1+\betaa_1+\betaa_2+\betab_1+\betab_2; 1)
\end{multline}
where ${_3}F_2(c_1,c_2,c_3; d_1, d_2; 1)$ denotes a hypergeometric function of order $(3,2)$ and argument unity.  The final line follows from the Thomae transformation T7 in 
\citet[Appendix A]{hypergeom3F2:review}. 
Due to the normalization of the kernels, $M$ does not depend on the choice of kernel window.

When its parameters are all positive, as in the final form in \eqref{eq:defineM}, numerical solution to  
${_3}F_2(c_1,c_2,c_3; d_1, d_2; 1)$ is straightforward via the standard hypergeometric series expansion.  In practice, we are most often interested in integer-valued cases for which $M$ has a simple closed-form solution.

For given kernel window and PIT value, let $W_{\betaa,\betab}$ be the transformed PIT value under a beta kernel with parameters $(\betaa,\betab)$.  A recurrence rule for the incomplete beta function
\citep[][eq.~6.6.7]{bib:abramowitz-stegun-65} leads to a linear relationship among 
``neighboring'' transformations:
\begin{equation}
\label{eq:betarecur}
(\betaa+\betab) W_{\betaa,\betab} = \betaa W_{\betaa+1,\betab} +  \betab W_{\betaa,\betab+1}
\end{equation}
An immediate implication is that the uniform, linear increasing and  linear decreasing transformations (parameter sets (1,1), (2,1) and (1,2), respectively) are linearly dependent.  
Any pair of these kernels would yield an equivalent bispectral test, and a trispectral test using all three kernels would be undefined due to a singular covariance matrix $\Sigma_W$.  By iterating the recurrence relationship, we can derive linear relationships among sets of kernels with integer-valued parameter differences $\betaa_i-\betaa_\ell$ and $\betab_i-\betab_\ell$, which would lead to redundancies among the corresponding $j$-spectral tests.

\section{Monte Carlo simulations}
\label{app:MCsimstudy}
To compare unconditional tests we generate pseudo PIT
values by first sampling from
standard normal and scaled Student $t_5$ and $t_3$ distributions, which represent
the true model. We then transform
the values to the interval $(0,1)$ using the standard normal
cdf which is taken to be the risk manager's model. The Student distributions are scaled to have
variance one so differences stem from different tail
shapes rather than different variances. The PIT samples arising from
normal are uniformly distributed and are used to evaluate
the size of the tests. The PIT samples arising from the Student $t$
distributions show the kind of departures from uniformity that are
observed when tails are poorly estimated. 

The majority of the kernel and test abbreviations are explained in Section~4.3
of the main paper. Additional mnemonics used in this Supplement are
\begin{description}
\item[5-level multinomial tests:]  we apply the Pearson test (PE5) and the 
 Z-test with discrete uniform kernel (ZU5) in addition to the 3-level
 tests used in the main paper.
\item[Discrete LR-tests:] the binomial
  LR-test (LR1) and the 3-level multinomial LR-test (LR3).
\item[Continuous LR-test:] the LR-test described in Section~4.2 of the main paper which
  may be viewed as an extension of the test of Berkowitz (LRB).
\end{description}
\noindent The baseline sample size is $n =750$, which corresponds approximately to the three-year 
samples of bank data in the main paper.  In all simulation experiments number of replications is $2^{16}=\num{65536}$.

\begin{table}[htb!]
  \centering
  \begin{tabular}{*{2}{l}*{5}{r}}
    \toprule
    window & \( F \) \textbar\ kernel & \multicolumn{1}{c}{BIN} & \multicolumn{1}{c}{ZU3} & \multicolumn{1}{c}{ZU5} & \multicolumn{1}{c}{PE3} & \multicolumn{1}{c}{PE5} \\
    \midrule
    narrow & Normal & 6.1 & 4.9 & 4.6 & 5.3 & 5.9 \\
    & Scaled t5 & 33.9 & 35.0 & 34.0 & 40.3 & 33.5 \\
    & Scaled t3 & 24.0 & 24.8 & 24.1 & 43.4 & 33.5 \\ \addlinespace[3pt]
    wide & Normal & 6.1 & 5.0 & 4.8 & 5.1 & 5.6 \\
    & Scaled t5 & 33.9 & 10.7 & 11.5 & 55.5 & 46.3 \\
    & Scaled t3 & 24.0 & 13.5 & 11.0 & 90.6 & 81.1 \\
    \bottomrule
  \end{tabular}
    \caption{Estimated size and power of unconditional discrete Z-tests.\\  We report the percentage of rejections of the null hypothesis at the 5\% confidence level based on \num{65536} replications. The number of days in each backtest sample is $n=750$. The narrow window is [0.985, 0.995] and the wide window is [0.95, 0.995].}
  \label{table:unconditional-discrete}
\end{table}

Table~\ref{table:unconditional-discrete} provides a comparison of the
5-level discrete Z-tests (ZU5 and PE5) with their 3-level counterparts
(ZU3 and PE3) and the
binomial score test (BIN).  For the discrete uniform kernel, the
5-level test is similar in size and power to the 3-level test. For the Pearson kernel,
however, the 5-level test is notably \textit{less} powerful than the 3-level test, and is slightly oversized. 
On this evidence, there appears
to be no advantage in choosing 5-level tests over 3-level tests.

\begin{table}[htbp]
  \centering
  \begin{tabular}{*{3}{l}*{10}{r}}
    \toprule
    window & \( F \) & \( n \) \textbar\ kernel & \multicolumn{1}{c}{BIN} & \multicolumn{1}{c}{ZU3} & \multicolumn{1}{c}{PE3} & \multicolumn{1}{c}{ZU} & \multicolumn{1}{c}{ZA} & \multicolumn{1}{c}{ZE} & \multicolumn{1}{c}{\ZLp} & \multicolumn{1}{c}{\ZLn} & \multicolumn{1}{c}{ZLL} & \multicolumn{1}{c}{PNS} \\
    \midrule
    narrow & Normal & 250 & 4.1 & 4.2 & 5.0 & 3.9 & 3.9 & 3.9 & 4.1 & 3.7 & 5.3 & 5.1 \\
    &  & 500 & 3.9 & 4.6 & 5.4 & 4.6 & 4.6 & 4.5 & 4.6 & 4.6 & 4.7 & 4.7 \\
    &  & 750 & 6.1 & 4.9 & 5.3 & 4.7 & 4.7 & 4.7 & 4.6 & 4.8 & 4.8 & 4.9 \\ \addlinespace[3pt]
    & Scaled t5 & 250 & 17.4 & 19.6 & 18.0 & 18.5 & 18.9 & 18.0 & 22.0 & 14.6 & 20.9 & 22.5 \\
    &  & 500 & 22.1 & 27.1 & 30.9 & 26.5 & 26.9 & 25.7 & 31.5 & 21.6 & 30.2 & 33.6 \\
    &  & 750 & 33.9 & 35.0 & 40.3 & 33.8 & 34.4 & 33.0 & 40.3 & 27.1 & 40.0 & 44.7 \\ \addlinespace[3pt]
    & Scaled t3 & 250 & 13.4 & 15.3 & 17.5 & 14.3 & 14.7 & 13.8 & 19.2 & 9.7 & 20.8 & 22.9 \\
    &  & 500 & 15.9 & 20.2 & 31.8 & 19.6 & 20.1 & 18.7 & 26.4 & 14.0 & 31.0 & 36.7 \\
    &  & 750 & 24.0 & 24.8 & 43.4 & 23.9 & 24.3 & 23.3 & 32.7 & 16.5 & 43.3 & 50.5 \\ \addlinespace[6pt]
    wide & Normal & 250 & 4.1 & 4.4 & 5.2 & 4.8 & 4.8 & 4.8 & 4.7 & 4.8 & 4.8 & 5.1 \\
    &  & 500 & 3.9 & 4.7 & 5.1 & 4.9 & 4.9 & 4.8 & 4.7 & 4.9 & 4.8 & 5.0 \\
    &  & 750 & 6.1 & 5.0 & 5.1 & 4.9 & 4.9 & 4.9 & 4.9 & 4.9 & 5.0 & 5.0 \\ \addlinespace[3pt]
    & Scaled t5 & 250 & 17.4 & 8.1 & 23.0 & 5.9 & 6.3 & 5.7 & 8.9 & 4.9 & 17.2 & 24.4 \\
    &  & 500 & 22.1 & 9.7 & 40.3 & 6.3 & 6.5 & 6.0 & 10.6 & 5.4 & 31.3 & 41.6 \\
    &  & 750 & 33.9 & 10.7 & 55.5 & 6.4 & 6.6 & 6.1 & 11.9 & 5.8 & 45.1 & 57.5 \\ \addlinespace[3pt]
    & Scaled t3 & 250 & 13.4 & 9.1 & 36.1 & 7.7 & 9.1 & 6.8 & 6.3 & 10.9 & 30.2 & 42.7 \\
    &  & 500 & 15.9 & 11.3 & 70.9 & 12.8 & 14.8 & 11.1 & 6.8 & 21.5 & 64.9 & 77.4 \\
    &  & 750 & 24.0 & 13.5 & 90.6 & 17.7 & 20.4 & 15.4 & 7.4 & 31.9 & 85.8 & 93.1 \\
    \bottomrule
  \end{tabular}
  \caption{Effect of backtest sample size on size and power of unconditional Z-tests. We report the percentage of rejections of the null hypothesis at the 5\% confidence level based on \num{65536} replications. The narrow window is [0.985, 0.995] and the wide window is [0.95, 0.995].}
  \label{table:unconditional-sampsize}
\end{table}

Table~\ref{table:unconditional-sampsize} demonstrates that the conclusions drawn from Table~2 in the main paper are robust to the choice of backtest sample size $n$.  As one would expect, power typically increases with $n$.  For both scaled $t$ distributions and both kernel windows, we find that the ordering across kernels in power is little changed. Most importantly, the five properties summarized in Section 4.3 hold regardless of $n$.

\begin{table}[htb!]
  \centering
  \begin{tabular}{*{2}{l}*{7}{r}}
    \toprule
    window & \( F \) \textbar\ test & \multicolumn{1}{c}{BIN} & \multicolumn{1}{c}{LR1} & \multicolumn{1}{c}{ZU3} & \multicolumn{1}{c}{PE3} & \multicolumn{1}{c}{LR3} & \multicolumn{1}{c}{PNS} & \multicolumn{1}{c}{LRB} \\
    \midrule
    narrow & Normal & 6.1 & 4.1 & 4.9 & 5.3 & 8.2 & 4.9 & 5.5 \\
    & Scaled t5 & 33.9 & 24.0 & 35.0 & 40.3 & 34.3 & 44.7 & 37.6 \\
    & Scaled t3 & 24.0 & 16.1 & 24.8 & 43.4 & 46.5 & 50.5 & 49.2 \\ \addlinespace[3pt]
    wide & Normal & 6.1 & 4.1 & 5.0 & 5.1 & 6.1 & 5.0 & 5.1 \\
    & Scaled t5 & 33.9 & 24.0 & 10.7 & 55.5 & 52.2 & 57.5 & 57.7 \\
    & Scaled t3 & 24.0 & 16.1 & 13.5 & 90.6 & 92.7 & 93.1 & 95.0 \\
    \bottomrule
  \end{tabular}
      \caption{Comparison of LR-tests and Z-tests in size and power.\\  We report the percentage of rejections of the null hypothesis at the 5\% confidence level based on \num{65536} replications. The number of days in each backtest sample is $n=750$. The narrow window is [0.985, 0.995] and the wide window is [0.95, 0.995].}
  \label{table:unconditional-comparison}
\end{table}

Table~\ref{table:unconditional-comparison} compares LR-tests with
Z-tests. The binomial LR-test (LR1) is less powerful than the binomial score
test; the former is slightly oversized while the latter is a touch
undersized. The 3-level LR-test (LR3) is compared against a 3-level monospectral test (ZU3) 
and a trispectral Pearson test (PE3). The LR-test is similar to the Pearson test in power (and mostly more
powerful than the ZU3 test), but the LR-test is oversized. 
The generalized Berkowitz test (LRB) performs slightly less well
than the truncated probitnormal score test (PNS) for the narrow window
and very slightly better for the wider window.  Results of a similar exercise with smaller backtest samples ($n=250$) are qualitatively similar (untabulated). We conclude that the Z-tests 
are preferred to their LR-test counterparts, particularly when size is a paramount concern.

The next two experiments relate to conditional Z-tests; the CVT choices are 
defined in Table 3 in the main paper. 
Note that when CVT takes the value \textit{None}, the test is an unconditional
test. Table~\ref{table:dynamic-size} estimates the size of the tests
when samples are uniformly distributed and serially independent. The same messages
emerge as in the excerpt in Section~5.2: when
the CVT is \EMone\ or \EMtwo\ the tests are quite badly oversized,
particularly for the former; choosing \MDfour\ or \MDhalf\ as CVT substantially 
mitigates (but does not eliminate) oversizing.

\begin{table}[htbp]
  \centering
  \begin{tabular}{*{2}{l}*{6}{r}}
    \toprule
    window & CVT \textbar\ kernel & \multicolumn{1}{c}{BIN} & \multicolumn{1}{c}{ZU} & \multicolumn{1}{c}{\ZLp} & \multicolumn{1}{c}{\ZLn} & \multicolumn{1}{c}{ZLL} & \multicolumn{1}{c}{PNS} \\
    \midrule
    narrow & None & 6.2 & 4.8 & 4.6 & 4.9 & 4.9 & 5.0 \\
    & \EMone & 13.3 & 14.4 & 16.0 & 11.5 & 11.8 & 10.4 \\
    & \EMtwo & 8.0 & 9.0 & 10.4 & 8.4 & 8.5 & 8.9 \\
    & \MDfour & 6.8 & 6.7 & 7.2 & 6.4 & 6.6 & 6.4 \\
    & \MDhalf & 6.7 & 6.7 & 7.0 & 6.4 & 6.6 & 6.4 \\ \addlinespace[3pt]
    wide & None & 6.2 & 4.9 & 5.0 & 4.9 & 4.9 & 4.9 \\
    & \EMone & 13.3 & 8.5 & 9.2 & 8.3 & 8.3 & 7.7 \\
    & \EMtwo & 8.0 & 7.3 & 8.1 & 7.0 & 6.9 & 6.7 \\
    & \MDfour & 6.8 & 5.3 & 5.6 & 5.2 & 5.3 & 5.4 \\
    & \MDhalf & 6.7 & 5.5 & 5.7 & 5.3 & 5.5 & 5.6 \\
    \bottomrule
  \end{tabular}
  \caption{Estimated size of tests of conditional coverage.\\   We report the percentage of rejections of the null hypothesis at the 5\% confidence level based on \num{65536} replications. The number of days in each backtest sample is $n=750$. ARMA parameters are AR = 0.95, MA = -0.85. The narrow window is [0.985, 0.995] and the wide window is [0.95, 0.995].} 
  \label{table:dynamic-size}
\end{table}

Table~\ref{table:dynamic-power} is an examination of power for the same
kernels and CVT functions. The aim of the underlying simulation is to
produce pseudo PIT values that are (i) serially dependent with a
dependence structure that is typical when stochastic volatility in the
data is ignored and (ii) possibly non-uniform with the same
distributions used for the unconditional tests.

The data generating process is designed to mimic the behavior of volatile financial return series, such as daily
log-returns on a stock index. Suppose we have empirical return data $X_1,\ldots,X_n$ and we form a
version of the empirical cdf bounded away from 0 and 1 by taking $\widehat{F}_n(x) = (n+1)^{-1} \sum_{t=1}^n
\indicator{X_t \leq x}$ and then construct data $
\widehat{U}_t = \widehat{F}_n(X_t)$.  The transformed returns $
(\widehat{U}_t)$ are close to uniformly distributed and show
negligible serial correlation. However, under the
v-shaped transformation $V(u) = |2u-1|$, we obtain data
$(V(\widehat{U}_t))$ which remain approximately uniformly distributed
but show strong serial correlation. Let $T$
denote the transformation $T(u) = \Phi^{-1}(V(u))$. If we fit
Gaussian ARMA models to the transformed data $(T(\widehat{U}_t))$ we
find that an ARMA(1,1) process often fits well and typical values for the AR and MA parameters are
around 0.95 and -0.85.

We want to generate losses $(L_t)$ with marginal
distribution $\losscdf$ in Table~\ref{table:dynamic-power} such that,
if $U_t = \losscdf(L_t)$,
the process
$(T(U_t))$ is \textit{exactly} a Gaussian ARMA(1,1) process with AR parameter 0.95, MA parameter -0.85, mean zero and
variance one. Let  $(Z_t)$ be such a Gaussian ARMA process and let $(\cointoss_t)$
be a series of iid Bernoulli variables with mean
0.5. We apply the following series of transformations to construct $(L_t)$:
\begin{equation}\label{eq:55}
  \tilde{U}_t = \Phi(Z_t),\quad U_t =
\tfrac{1}{2}(1+\tilde{U}_t)^{\cointoss_t}(1-\tilde{U}_t)^{(1-\cointoss_t)},\quad L_t = \losscdf^{-1}(U_t).
\end{equation}
The first transformation induces uniformity; the second can be thought of as a method of
stochastically inverting $V(u) = |2u-1|$
while preserving uniformity; the third transformation gives losses with df $\losscdf$. As for the unconditional tests, pseudo PIT values are
obtained by the transformation $P_t = \Phi(L_t)$.


\begin{table}[htbp]
  \centering
  \begin{tabular}{*{3}{l}*{6}{r}}
    \toprule
    window & \( $\losscdf$ \) & CVT \textbar\ kernel & \multicolumn{1}{c}{BIN} & \multicolumn{1}{c}{ZU} & \multicolumn{1}{c}{\ZLp} & \multicolumn{1}{c}{\ZLn} & \multicolumn{1}{c}{ZLL} & \multicolumn{1}{c}{PNS} \\
    \midrule
   narrow & Normal & None & 12.1 & 10.8 & 10.0 & 11.2 & 9.1 & 9.5 \\
    &  & \EMone & 30.0 & 31.5 & 32.0 & 29.4 & 29.1 & 28.1 \\
    &  & \EMtwo & 28.1 & 30.9 & 31.8 & 30.2 & 29.5 & 30.5 \\
    &  & \MDfour & 30.3 & 32.6 & 30.7 & 33.6 & 32.3 & 33.4 \\
    &  & \MDhalf & 19.3 & 21.7 & 19.9 & 22.5 & 22.0 & 23.2 \\ \addlinespace[3pt]
    & Scaled t5 & None & 36.0 & 36.2 & 40.9 & 31.4 & 41.2 & 45.1 \\
    &  & \EMone & 47.4 & 54.9 & 59.7 & 46.1 & 53.4 & 53.4 \\
    &  & \EMtwo & 48.4 & 52.7 & 56.8 & 49.5 & 54.7 & 56.4 \\
    &  & \MDfour & 56.4 & 60.7 & 63.1 & 57.2 & 61.1 & 62.0 \\
    &  & \MDhalf & 49.6 & 54.5 & 57.3 & 50.5 & 55.6 & 56.8 \\ \addlinespace[3pt]
    & Scaled t3 & None & 28.1 & 28.3 & 35.0 & 22.2 & 44.1 & 50.7 \\
    &  & \EMone & 42.2 & 50.4 & 56.1 & 39.7 & 53.3 & 54.3 \\
    &  & \EMtwo & 42.5 & 47.3 & 53.5 & 42.6 & 53.3 & 55.6 \\
    &  & \MDfour & 50.1 & 54.8 & 58.9 & 49.7 & 58.8 & 60.4 \\
    &  & \MDhalf & 44.1 & 49.5 & 54.1 & 43.7 & 54.2 & 56.2 \\ \addlinespace[6pt]
    wide & Normal & None & 12.1 & 17.8 & 15.9 & 18.3 & 14.6 & 15.4 \\
    &  & \EMone & 30.0 & 31.1 & 30.1 & 31.6 & 30.4 & 30.3 \\
    &  & \EMtwo & 28.1 & 36.0 & 34.5 & 36.2 & 34.6 & 34.9 \\
    &  & \MDfour & 30.3 & 52.1 & 46.4 & 54.1 & 51.2 & 53.8 \\
    &  & \MDhalf & 19.3 & 44.9 & 36.9 & 48.0 & 44.7 & 48.6 \\ \addlinespace[3pt]
    & Scaled t5 & None & 36.0 & 19.5 & 22.3 & 19.2 & 52.9 & 63.9 \\
    &  & \EMone & 47.4 & 35.4 & 38.9 & 31.5 & 49.8 & 57.9 \\
    &  & \EMtwo & 48.4 & 41.0 & 45.3 & 37.7 & 55.1 & 62.6 \\
    &  & \MDfour & 56.4 & 55.2 & 56.7 & 52.8 & 67.2 & 74.1 \\
    &  & \MDhalf & 49.6 & 51.1 & 51.4 & 49.2 & 65.7 & 73.5 \\ \addlinespace[3pt]
    & Scaled t3 & None & 28.1 & 28.4 & 18.5 & 39.3 & 87.6 & 93.9 \\
    &  & \EMone & 42.2 & 32.8 & 32.6 & 34.2 & 71.1 & 82.1 \\
    &  & \EMtwo & 42.5 & 38.8 & 38.7 & 40.1 & 75.3 & 85.2 \\
    &  & \MDfour & 50.1 & 50.7 & 48.4 & 52.7 & 82.5 & 90.2 \\
    &  & \MDhalf & 44.1 & 48.1 & 43.4 & 51.2 & 83.6 & 91.1 \\
    \bottomrule
  \end{tabular}
  \caption{Estimated power of tests of conditional coverage when DGP is based on an ARMA(1,1) model.\\   We report the percentage of rejections of the null hypothesis at the 5\% confidence level based on \num{65536} replications. The number of days in each backtest sample is $n=750$. ARMA parameters are AR = 0.95, MA = -0.85. The narrow window is [0.985, 0.995] and the wide window is [0.95, 0.995].} 
  \label{table:dynamic-power}
\end{table}

In Table~\ref{table:dynamic-power} we observe that there is generally
a very large increase in power when we move from the unconditional
tests (CVT = \textit{None}) to the conditional tests. This is evident even when
the distribution of the simulated data is uniform ($\losscdf$ = Normal). The
most powerful tests are bispectral tests applied to the wider window
using the CVT functions \MDfour\ and \MDhalf.

\end{onehalfspacing}

\bibliographystyle{jf}  
\bibliography{spectraltest}